\newtheorem{defn}{Definition}
\newtheorem{lemma}{Lemma}
\newenvironment{proof}{{\em Proof.}}{$\Box$}
\newcommand{\cnn}{\mathbb{C}^{n\times n}}
\newcommand{\cn}{\mathbb{C}^{n}}
\newcommand{\mathC}{\mathbb{C}}
\newcommand{\x}{\times}
\newcommand{\eqnref}[1]{(\ref{#1})}
\newcommand{\Dw}{\ensuremath{D_\text{w}}}
\DeclareMathOperator{\range}{range}
\DeclareMathOperator{\argmin}{argmin}
\begin{document}

\title{ Deflation and Flexible SAP-Preconditioning of GMRES in Lattice QCD Simulation}

\author[wup]{Andreas Frommer}
\author[reg]{Andrea Nobile}
\author[wup]{Paul Zingler}

\address[reg]{Institute for Theoretical Physics, University of
  Regensburg, 93040 Regensburg, Germany}
\address[wup]{Department of Mathematics, University of Wuppertal,
  42097 Wuppertal, Germany}

\tnotetext[t1]{Supported by DFG collaborative research center
 SFB/TR-55 ``Hadron Physics from Lattice QCD''.}

\begin{abstract}
  The simulation of lattice QCD on massively parallel computers stimulated the 
  development of scalable algorithms for the solution of sparse linear systems.
  We tackle the problem of the Wilson-Dirac operator inversion by 
  combining a Schwarz alternating procedure (SAP) in multiplicative form 
  with a flexible variant of the GMRES-DR algorithm.   
  We show that restarted GMRES is not able to converge when the system 
  is poorly conditioned. By adding deflation in the form of the FGMRES-DR 
  algorithm, an important fraction of the information produced by the iterates
  is kept between successive restarts leading to convergence in cases in which 
  FGMRES stagnates.
  
\end{abstract}

\maketitle

\section{Introduction}

The simulation of Lattice QCD is among one of the most challenging problems in computational science because of its enormous computational cost, see \cite{gatt09}, e.g.. 
With sufficiently powerful computers becoming available during the last fifteen years, the simulation of Lattice QCD including dynamical fermions became a reality.
The de-facto standard algorithm used to accomplish this task is the Hybrid Monte Carlo (HMC) algorithm \cite{KennedyDuane}.
HMC introduces a fictious time, in which a dynamical system is evolved according to the system Hamiltonian.
In order to integrate the Hamiltonian equations of motion, the forces due to both, gauge field and pseudo-fermion field,
have to be evaluated accurately.
The by far most demanding numerical task is the inversion of the fermionic Dirac matrix, needed to evaluate the fermion force. In this paper we investigate a class of inverters for the fermionic Dirac matrix which is particularly well adapted 
to current high performance hardware architecture. Our work was primarily inspired by the architecture of the QPACE machine, see \cite{qpace09,qpace10}, but it is of a general nature and applies to all architectures were data movements rather than arithmetic operations tend to be the limiting factor on performance.   

The development of machine-aware algorithms has become particularly important during the last years as computer architectures move towards dramatically increased floating point (FP) speed and increased on-chip parallelism.
For reasons of cost and because of physical limitations, these improvements are not accompanied by comparable 
progresses in the bandwidths and latencies of main memory and processor interconnects. It is predicted that this
situation will be even aggravated as we move towards the Exascale area, see \cite{OppChaExa}, e.g.. 

An often used rough indicator of balance between processor and memory subsystems is the ratio between peak main memory bandwidth and peak floating point performance. In the past few years this value went down from  $\approx 1$ to $\approx 1/8$ and less for the CELL processor, modern GPUs and general purpose x86 systems. This increasing gap is the main reason for a line of computer science research on software techniques aimed at maximizing cache utilization. A famous example is \cite{FFTW}.

The algorithm cost evaluation must take into account not only the FP operations (as it was traditionally done some years ago when FP was the most relevant bottleneck in numerical applications), but has to carefully evaluate the amount and pattern of data movement among the levels of memory hierarchy and the network as these are now the most limiting factors.

Algorithms that are more suitable for these increasingly unbalanced computer architectures are particularly attractive as they can give easier access to the FP performances offered today.        

We describe an algorithm for the Dirac-Wilson matrix inversion that is able to exploit the architectural features of modern machines. 
We show the convergence behaviour and demonstrate the performance with our implementation on the peculiar SFB-TR55 QPACE architecture based on the IBM PowerXCell8i processor with state of the art lattices.

 \section{Domain decomposition and SAP} 
The Wilson fermion matrix $\Dw$ \cite{Wilson:1975id} describes a (periodic) nearest neighbor
coupling on the 4-dimensional
space-time lattice.  Without further explanation, our desire is to solve the linear system 
\[
\Dw x = b.
\]
Owed to the complexity and size of $\Dw$, a direct solution is not feasible and we therefore exploit the structure of $\Dw$.
We decompose the overall lattice into sublattices, called {\em domains}. For each
domain, we obtain a local Wilson fermion operator by simple restriction; at the local boundary 
of a domain, we remove the couplings with neighboring domains. To describe this formally, let $n = (n_1,n_2,n_3,n_4)$
stand for a 4-tupel describing a lattice site and let $L = \{ n: (0,0,0,0) \leq n \leq (d_1-1,d_2-1,d_3-1,d_4-1)\}$ denote
the overall lattice. The Dirac Wilson operator can then be written as
\begin{equation}
\label{Dw}
[D_\text{w}]_{nm} = 
\delta_{n,m}
   - \kappa \left( \sum_{j=1}^4  (1+\gamma_j) U_{n,j} \delta_{n+\hat j,m} 
      +  (1-\gamma_j)
        U^\dagger_{n-\hat j,j} \delta_{n-\hat j,m} \right), \, n,m \in L,
\end{equation}
where $n \pm \hat j$ is to be taken mod $d_j$.

A domain $L^\ell$ is characterized by the bounds $a_j^\ell, d_j^\ell$ for its lattice sites which are
given by $L^\ell = \{ n: (a^\ell_1,a^\ell_2,a^\ell_3,a^\ell_4) \leq n \leq  (d^\ell_1-1,d^\ell_2-1,d^\ell_3-1,d^\ell_4-1) \}$.
The local Dirac operator for domain $L^\ell$ is thus given as
\begin{equation}
\label{Dw_local}
[D^\ell_\text{w}]_{nm} = 
\delta_{n,m}
   - \kappa \left( \sum_{j=1}^4 {}'  (1+\gamma_j) U_{n,j} \delta_{n+\hat j,m} 
      +  (1-\gamma_j)
        U^\dagger_{n-\hat j,j} \delta_{n-\hat j,m} \right), n,m \in L^\ell,
\end{equation}
where $\sum '$ indicates that the sum is to be taken only over those contributions for 
which $n+\hat j$ or $n - \hat j$ is in $L^\ell$.

Assuming that the local Wilson operators are easy to invert, we can perform the
following basic domain decomposition 
iteration, known as the multiplicative Schwarz method in the domain decomposition literature \cite{hackbusch94,smith96}.  
\begin{algorithm}
  \caption{Multiplicative Schwarz method \label{ms:alg}}
  \DontPrintSemicolon
choose intitial guess $x$ \;
\Repeat{until convergence} {
 \For{$\ell=1,\ldots,p$}{
   compute residual $r^\ell$ for domain $\ell$: $r^\ell = (b-\Dw x)^\ell$ \;
   solve $\Dw^\ell y^\ell = r^\ell$\;
   update: replace part $x^\ell$ in $x$ by $y^\ell$ \; 
   }
}
\end{algorithm}

Herein, the superscript $\ell$ indicates that we only take the part which corresponds to the subdomain $\ell$.
Due to the nearest neighbor coupling of $\Dw$, computing $r^\ell$ is a local operation which 
involves only those components of $x$ which belong to subdomain $L^\ell$, i.e. $x^\ell$ and those from
neighboring subdomains which are at the boundary to domain $L^\ell$.  

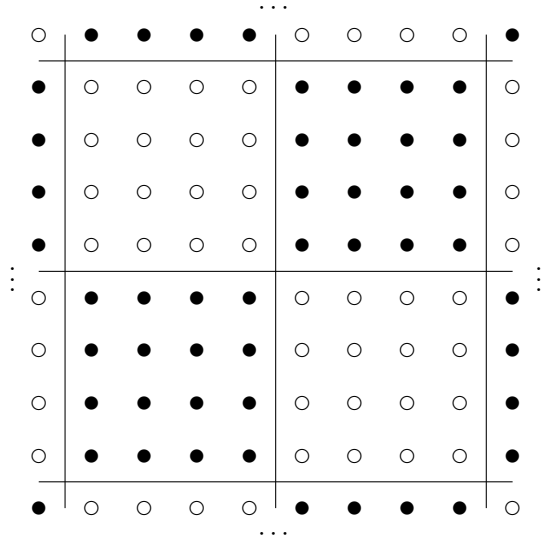
\begin{figure}
\begin{center}
\setlength{\unitlength}{0.7cm}
\begin{picture}(10,10)(-1,-1)
\multiput(0,0)(1,0){4}{\multiput(0,0)(0,1){4}{\circle*{0.25}}}
\multiput(4,4)(1,0){4}{\multiput(0,0)(0,1){4}{\circle*{0.25}}}
\multiput(0,8)(1,0){4}{\multiput(0,0)(0,1){1}{\circle*{0.25}}}
\multiput(4,-1)(1,0){4}{\multiput(0,0)(0,1){1}{\circle*{0.25}}}
\multiput(-1,4)(1,0){1}{\multiput(0,0)(0,1){4}{\circle*{0.25}}}
\multiput(8,0)(1,0){1}{\multiput(0,0)(0,1){4}{\circle*{0.25}}}
\multiput(0,4)(1,0){4}{\multiput(0,0)(0,1){4}{\circle{0.25}}}
\multiput(4,0)(1,0){4}{\multiput(0,0)(0,1){4}{\circle{0.25}}}
\multiput(4,8)(1,0){4}{\multiput(0,0)(0,1){1}{\circle{0.25}}}
\multiput(0,-1)(1,0){4}{\multiput(0,0)(0,1){1}{\circle{0.25}}}
\multiput(-1,0)(1,0){1}{\multiput(0,0)(0,1){4}{\circle{0.25}}}
\multiput(8,4)(1,0){1}{\multiput(0,0)(0,1){4}{\circle{0.25}}}
\put(-1,-0.5){\line(1,0){9}}
\put(-1,3.5){\line(1,0){9}}
\put(-1,7.5){\line(1,0){9}}
\put(3.5,-1){\line(0,1){9}}
\put(-0.5,-1){\line(0,1){9}}
\put(7.5,-1){\line(0,1){9}}

\put(-1,8){\circle{0.25}}
\put(-1,-1){\circle*{0.25}}
\put(8,-1){\circle{0.25}}
\put(8,8){\circle*{0.25}}

\put(-1.5,3.5){\makebox(0,0){$\vdots$}}
\put(8.5,3.5){\makebox(0,0){$\vdots$}}
\put(3.5,-1.5){\makebox(0,0){$\cdots$}}
\put(3.5,8.5){\makebox(0,0){$\cdots$}}
\end{picture}
\end{center}
\caption{domain decomposition in 2 dimensions \label{red-black:fig}}
\end{figure}

If we use a red-black coloring of the subdomains as depicted in Fig.~\ref{red-black:fig}
for a two-dimensional lattice, and we have an even number of sub-domains in each dimension of the lattice, then there 
is no coupling between subdomains of the same color. When we order all red (say) domains first, we see that in the
algorithm above the values of the residual $r^\ell$ does not depend on whether or not we updated the part $x^j$ of the iterate
for the other subdomains of the same color. If we proceed by colors, we end up with the following variant of
the basic domain decomposition iteration which is termed the Schwarz alternating procedure (SAP) in \cite{Luescher2003}.

\begin{algorithm}
  \caption{SAP iteration \label{sap:alg}}
  \DontPrintSemicolon
  \DontPrintSemicolon
choose intitial guess $x$ \;
\Repeat{until convergence} {
 \For{all red domains $\ell$}{
   compute residual $r^\ell$ for domain $\ell$: $r^\ell = (b-\Dw x)^\ell$ \;
   solve $\Dw^\ell y^\ell = r^\ell$\;}
 update: replace part $x^\ell$ in $x$ by $y^\ell$ for all red domains $\ell$ \;
\For{all black domains $\ell$}{
   compute residual $r^\ell$ for domain $\ell$: $r^\ell = (b-\Dw x)^\ell$ \;
   solve $\Dw^\ell y^\ell = r^\ell$\;}
 update: replace part $x^\ell$ in $x$ by $y^\ell$ for all black domains $\ell$ \;
}
\end{algorithm}

\section{SAP in computational practice}
In computational practice, the SAP method can rarely be used directly in the form given by Algorithm~\ref{sap:alg}.
On one hand it is usually too costly to solve the local systems $\Dw^\ell y^\ell = r^\ell$ exactly.
One rather computes an approximation to $y^\ell$ using another, {\em inner} iterative method on the systems
$\Dw^\ell y^\ell = r^\ell$. Usually, the fastest overall method arises if one requires a fairly low accuracy in the inner iteration. For example, one requires that the approximate solution $\tilde{y}^\ell$
reduces the initial residual by just a factor of 10, i.e.\ one requires
\[
\| r^\ell - \Dw^\ell \tilde{y}^\ell\| \leq 0.1 \cdot \| r^\ell \|.
\]
Such a method is termed an {\em inexact} SAP method.
 
On the other hand, even the exact SAP iteration may diverge for the Wilson Dirac operator when the hopping parameter $\kappa$ approaches
the critical value $\kappa_c$. Indeed, most of the convergence theory for multiplicative Schwarz methods assumes that the operator
is hermitian and positive definite, see \cite{hackbusch94,smith96}, which is not the case for the Wilson Dirac operator. 

The usage of SAP methods is particularly suitable on parallel machines and in general on machines with a memory hierarchy.
Provided that at least two subdomains are assigned to a machine node, there is no need for network communication during the solution of the Wilson equation on the subdomains, since all the necessary data is on the node.
Data only needs to be communicated for the update of the residual.
In practice, there are more than two subdomains per node giving the possibility to overlap network communication and computation.
The required network bandwidth, and in particular latency, is less stringent than for a simple Krylov solver, since the approximate solution of the subdomain Wilson equation requires some iterations.   
The required machine performance on global network operations (i.e. global sum, barrier, etc.) is also typically reduced, since the frequency of such operations is lower than in the case of a Krylov solver.
We have chosen SAP as a preconditioner to be used on the most time-critical solves on our machine QPACE.
The QPACE torus network \cite{qpace10} is designed with a typical Krylov solver in mind and has therefore high bandwidth and low latency, nevertheless SAP is particularly appealing due to the peculiar memory hierarchy architecture of the IBM PowerXCell 8i Processor.
On this processor, the fast on-chip memory is not managed automatically by the hardware as on standard processors (cache), but its control is completely left to the programmer.
If the size of the subdomain is chosen such that it fits the fast memory, no main memory access is necessary for the subdomain solve, giving the possibility to achieve impressive sustained performance for the subdomain solves (up to 50\% of the peak) \cite{nobile}, \cite{naka_et_al11}.  
This feature of the SAP algorithm is interesting also for future architectures as the main memory performance will continue to improve at a lower rate than processor performance.

\section{Domain decomposition preconditioner} \label{short:sec}

Exact or inexact SAP iterations can be used as a preconditioner in a Krylov subspace method like GMRES 
\cite{Saad-Schultz:GMRES} or GCR \cite{Axelsson:1987}.  
This will usually result in an iterative process which converges faster than the Krylov subspace method without preconditioning or the stand-alone SAP iteration. We will discuss GMRES and GCR in quite some detail
later in section \ref{sec:restart}. At this point, we concentrate on aspects pertaining to 
the use of {\em inexact} SAP iterations in the context of preconditioning.

Loosely speaking, the speed of convergence of a Krylov subspace method for a generic system
\[
Ax = b
\]
gets faster the closer $A$ is to the identity. So, to accelerate the convergence one tries to 
construct (or implicitly use) a matrix $M$, the preconditioner, such that the matrix $AM$ of the (right) preconditioned
system
\[
(AM)y = b, \enspace x = My
\]
is close to the identity. In this sense, $M$ should be an approximation to the inverse of $A$.
Such $M$ is given implicitly if we perform some steps of an iterative procedure like SAP.

Each step of the Krylov subspace method requires one multiplication with $A$ and one with $M$.
If we perform exact SAP with a fixed number, $s$ say, of iterative steps, the preconditioner $M$
can be expressed as the truncated series
\[
M = \sum_{j=0}^{s-1} (A_L^{-1}A_U)A_L^{-1},
\]
where the matrices $A_L$ and $A_U$ result from a splitting $A = A_L - A_U$ of the matrix $A$. 
To be specific, for the SAP iteration from Algorithm~\ref{sap:alg}, where $A = \Dw$, this splitting is
based on the red-black permuted matrix $\Dw$ given as
\[
\Dw = \left[ \begin{array}{cc} \Dw^{bb} & \Dw^{br} \\
                                \Dw^{rb} & \Dw^{rr}
        \end{array}
      \right] .      
\]
Herein, the blocks $\Dw^{bb}$ and $\Dw^{rr}$ are made up of all the local Dirac operators $\Dw^\ell$
of the black and the red subdomains, respectively and the off-diagonal blocks contain the couplings between subdomains of different color. Then
\[
A_L = \left[ \begin{array}{cc} \Dw^{bb} & 0 \\
                                \Dw^{rb} & \Dw^{rr}
        \end{array}
      \right],  \enspace
A_U = \left[ \begin{array}{cc}  0 & - \Dw^{br} \\
                                0 &  0
        \end{array}
      \right].
\]

In {\em inexact} SAP we invert $A_L$ only approximately, using some inner iteration on the diagonal blocks $\Dw^\ell$. If this inner iteration is non-stationary, as it is the case if we perform some steps of GMRES or MR \cite{saad_book}, the implicitly given preconditioner changes at each iteration. This is a fact one has to account for when formulating the preconditioned Krylov subspace methods. In the linear algebra literature such methods are termed {\em flexible}, see \cite{Saad:FGMRES,saad_book} since they adapt themselves to situations where the preconditioner varies from one iteration to the next.

The next chapter will first introduce the flexible restarted GMRES method before turning to the crucial subject of this paper, namely its acceleration via deflation. 

\section{F-GMRES-DR}
\label{sec:restart}

To simplify the presentation we will use a generic notation in this chapter, i.e.\ the linear system to
solve will be denoted
\[
Ax = b.
\]
A preconditioner will be denoted $M$. In a variable preconditioning context, where the preconditioner will depend on the current iterative step $j$, we denote the preconditioner 
by $M_j$. Note that $M_j$ will usually not be given explicitly but rather be the result of
some steps of a non-stationary iteration like inexact SAP. 

Particular aspects for (inexact) SAP preconditioning of the Wilson Dirac operator will be discussed
later. 

\subsection{GMRES and GCR}
We consider a generic non-singular linear system
\[
A x = b \mbox{ with solution } x_* = A^{-1}b.
\]
Given an initial guess $x_0$ and its residual $r_0 = b-Ax_0$, the $j$-th Krylov 
subspace $K_j(A,r_0)$ is defined as
\[
K_j(A,r_0) = \mbox{span}\{r_0,Ar_0,\ldots,A^{j-1}r_0\}.
\]
For ease of notation, we will often write $K_j$ instead of $K_j(A,r_0)$.
It is known that the solution $x_* = A^{-1}b$ is contained in the space $x_0 + K_{j_*}$
where $j_*$ denotes the smallest $j$ for which $K_j = K_{j+1}$. Krylov subspace 
methods work by iteratively choosing approximations $x_j$ from $x_0 + K_j$ to approximate
$x_*$. The GMRES and GCR method are Krylov subspace methods which both obtain 
the ``optimal'' $x_j$ in the sense that the residual $r_j = b-Ax_j$ satisfies the minimal residual condition
\[
   \|r_j\|  = \min_{x \in x_0+K_j} \| b-Ax \|.
\]
Note that for any subspace $\mathcal{V}$ of $\mathC^n$ we have 
\begin{equation} \label {gen_minimal_residual:eq}
  x=x_0 + \delta x  \mbox{ minimizes } \|b-Ax\| \mbox{ over } x_0 + \mathcal{V} \, 
 \Leftrightarrow \,  r_0 -Ax \perp A \mathcal{V},
\end{equation}
where $r_0-Ax = b-A(x_0+\delta x)$ is the residual of $x=x_0+\delta x$. The equivalence 
\eqnref{gen_minimal_residual:eq} holds because 
the minimizer $\delta x$ is such that $A\delta x$ represents the orthogonal projection of 
$b-A(x_0+\delta x)$ onto $A\mathcal{V}$. So the minimal residual condition for GMRES or GCR is  
equivalent to 
\begin{equation} \label{rperp:eq}
   r_j \perp A K_j.
\end{equation}
GMRES works by building a sequence of orthonormal vectors  $v_1,\ldots,v_j$ which span
$K_j(A,r_0)$. 
The coefficients $\eta_i$ in $x_j = x_0 + 
\sum_{i=1}^j \eta_i v_i$ are then obtained by solving a small $(j+1) \times j$ least squares problem. 
GCR computes a sequence of orthonormal vectors $w_1,\ldots,w_j$ which span $AK_j$, 
and an auxiliary sequence $s_1,\ldots, s_j$ with $w_j = As_j$. The iterate $x_{j}$
is obtained by an update $x_j = x_{j-1} + \alpha_j s_{j-1}$. 
Both, GMRES and GCR, suffer from the fact that the number of vectors stored and the cost for the orthogonalizations 
increase linearly with $n$. A remedy is to restart or truncate the methods.
GCR requires twice as many vectors to store as GMRES, and its total arithmetic costs 
are slightly higher than in GMRES. 
If convergence is slow, GMRES is significantly more stable numerically 
than GCR as was proven in \cite{jiranek:1483}. There is one advantage of 
GCR over GMRES, however: While the GMRES algorithm has to be modified (slightly) to 
allow for variable preconditioning, GCR adapts itself ``automatically'' to such a situation. 
For this reason GCR was used in various contexts of variable 
preconditioners like, for example, \cite{Luescher2003}. 
In the present paper
we consider situations where there is an additional need for (inexact) deflation
in order to get sufficiently fast convergence. While this can be done elegantly and
efficiently within the GMRES framework, there seems to be no such way for GCR. This explains why 
we focus on GMRES from now on.

\subsection{Flexible GMRES} \label{fgmres:subsec}

Standard (non-variable) right preconditioning of the equation $Ax = b$ means that instead of
looking for an iterate $x_j \in x_0 + K_j(A,r_0)$ we now look for one in the space
$x_0 + MK_j(AM,r_0)$, where $M$ stands for the preconditioning matrix. The idea is that with
a good preconditioner $M$ (a good approximation to $A^{-1}$), the ``search space'' $x_0+MK_j(AM,r_0)$ 
contains substantially better approximations to the solution than $x_0+K_j(A,r_0)$. 
In right preconditioned GMRES \cite{saad_book} we obtain the $j$-th iterate $x_j$ as
\begin{equation} \label{right_prec:eq}
x_j = x_0 + M\delta y_j, \mbox{ where } \delta y_j = \argmin_{\delta y \in K_j(AM,r_0)} 
\| b-A(x_0+M\delta y)\|.
\end{equation}

Right preconditioned GMRES uses the Arnoldi process to compute an orthonormal basis 
of $K_j(AM,r^0)$, orthogonalizing $AMv^{j}$ against all previous vectors $v_1,\ldots,v_{j}$. 
The solution of \eqnref{right_prec:eq} can then be obtained as the solution of a small 
$(m+1) \times m$ least squares problem. We do not discuss this further here, since right 
preconditioned GMRES appears as a special case of flexible GMRES \cite{Saad:FGMRES} which we  
develop in all its details now. 

In a flexible context, the preconditioning matrix depends on the iterative step,
i.e.\ we have a new preconditioning matrix $M_j$ in each step $j$. In a ``flexible''
Arnoldi process we thus orthogonalize $AM_jv_{j}$ against all previous vectors $v_1,
\ldots, v_j$. It will turn out
useful to also keep track of the preconditioned vectors $z_j = M_j v_j$ for 
future use when computing (flexible) GMRES iterates. So we end up with 
the flexible Arnoldi process described as Algorithm~\ref{flexarno:alg}.  

\begin{algorithm}
    \LinesNumbered
    \DontPrintSemicolon
  \SetKwInOut{Input}{Input}
  \SetKwInOut{Output}{Output}
  \caption{Flexible Arnoldi Process\label{flexarno:alg}}
  \Input{$A\in \cnn$, $b \in \cn$, an integer $m$}
  \Output{$V_{m+1} = [v_1 | \ldots | v_{m+1}] \in \mathC^{n \x (m+1)}$, $Z_m=[z_1 | \ldots | z_m] \in \mathC^{n \x m}$  ,\\ $\widehat{H}_m=(h_{i,j}) \in \mathC^{{m+1} \x m}$}
  \BlankLine
  $\beta  =\|b\|$\;
  $v_1:= \frac{1}{\beta}b$\;
  \For{$j=1,\ldots, m$}{
      $z_j:=M_{j}v_j$ \tcc*[f]{preconditioning of new basis vector}  \;
      $w:= Az_j$ \;
      \For(\tcc*[f]{orthogon.\ against previous vectors}){$i=1,\ldots,j$}{
      	$h_{ij}:= v_i^H w$ \;
      	$w =w-h_{ij} v_i$ \;
        }
      $h_{j+1, j}:=\|w\|$ \;
      $v_{j+1}:= w /h_{j+1,j}$ \tcc*[f]{normalization}\;
  }
\end{algorithm}
 
The flexible Arnoldi relation
\begin{equation} \label{arnoldilike:eq}
	AZ_m = V_{m+1} \widehat{H}_m 
\end{equation}
summarizes the computations of Algorithm~\ref{flexarno:alg}, its $j$-th column representing  
\[
h_{j+1,j}v_{j+1} = Az_j - \sum_{i=1}^{j} h_{ij} v_i.
\]
Note that the $(m+1)\times m$ matrix $\widehat{H}_m$ is upper Hessenberg, i.e.\ 
its elements below the first subdiagonal, $h_{ij}$ for $i > j+1$, are all 0. The 
columns $v_j$ of $V_{m+1}$ are orthonormal, so we have $V_{m+1}^HV_{m+1} = I \in 
\mathC^{(m+1) \times (m+1)}$. 

In the non-flexible case, i.e.\ $M_j = M$ for all $j$, 
the orthonormal vectors $v_1,\ldots,v_j$ span 
the Krylov subspace $K_j(AM,r_0)$, and the vectors $z_i = Mv_i, i=1,\ldots,j$ span the search space 
$MK_j(AM,r_0)$. In this case we can 
formulate \eqnref{arnoldilike:eq} without $Z_m$ to retrieve the standard Arnoldi relation
\begin{equation} \label{arnoldi:eq}
(AM) V_m= V_{m+1} \widehat{H}_m.
\end{equation}

In the case of a variable preconditioner, there is no immediate notion of a (preconditioned) Krylov subspace, 
but we can still use the space spanned by the vectors $z_i$, i.e.\ $\range(Z_m)$ as a search space.
This is the approach taken in flexible GMRES (F-GMRES) \cite{Saad:FGMRES}, where we obtain the iterates $x_m$ by 
imposing the minimal residual condition
for $x_m \in x_0 + \range(Z_m)$. So we have to compute 
\begin{equation} \label{GMRES_ls:eq}
  \eta = \argmin_{\xi \in \mathC^{m}} \| b-A(x_0 + Z_m\xi) \|.
\end{equation}
Using the flexible Arnoldi relation \eqnref{arnoldilike:eq} and the fact that $V_{m+1}$ has 
orthonormal columns, we have
\[
  \| b-A(x_0 + Z_m\xi) \| =  \| r_0 - V_{m+1} \widehat{H}_m \xi \| = 
   \| V_{m+1} \beta e_1 - V_{m+1} \widehat{H}_m \xi \| = \| \beta e_1 -\widehat{H}_m \xi \|.
\]
So solving the $(m+1) \times m$ least squares problem $\eta = \argmin_{\xi \in \mathC^{m}}
 \|\beta e_1 -\widehat{H}_m \xi \|$ gives the coefficient vector $\eta$ from which
 we retrieve the FGMRES iterate $x_m$ as
\[
    x_m = x_0 + Z_m \eta.
\]
%

We note in passing that the $(m+1) \times m$ least squares problem is usually solved via 
a QR-factorization of $\widehat{H}_m$. Due to the upper Hessenberg structure of $\widehat{H}_m$,
this can easily be done in an iterative manner by updating the QR-decomposition of $\widehat{H}_j$
to that of $\widehat{H}_{j+1}$ using one additional Givens rotation. Moreover, it is 
possible to very cheaply update the norm of the residual of the $j$-th GMRES iterate without
explicitly computing the iterate. This can be used to implement an early termination criterion
in cases where the $j$-th (F)GMRES iterate is already accurate enough for $j < m$. Details can 
be found in \cite{saad_book}, e.g.. See also section~\ref{residual:sec}.

\subsection{Restarts and Deflation}
If larger values of $m$ are required to obtain a sufficiently accurate approximation $x_m$,
full F-GMRES as described in the previous section is not feasible since we have to store the
$2m$ vectors $v_j$ and $z_j$ and the total cost for the orthogonalizations in the flexible Arnoldi process
grow like $\mathcal{O}(nm^2)$. It is therefore common practice to use {\em restarted} F-GMRES.
The integer $m$ is fixed to a moderate value. One then goes through several cycles. Each cycle
performs $m$ steps of the full F-GMRES algorithm, with its initial guess given by
the approximation to the solution computed in the previous cycle.

While restarts overcome problems with storage and computational complexity, they may severely
degrade the convergence behaviour, and it might even happen that the method stagnates at a large 
residual norm without achieving any further progress. For difficult problems, this phenomenon 
has been widely observed in the literature, and several cures have been proposed. One of the 
most appropriate ones is the deflated restart modification which we discuss now.

Upon a restart, all the information contained in the search space $\mathcal{K}$ 
built up so far is lost. In particular, in the new cycle we do not
keep the residuals orthogonal to $A\mathcal{K}$. The idea of deflated restarts is
to extract a low-dimensional subspace $\mathcal{U}$ of $\mathcal{K}$ which contains the 
most important information in order to speed-up convergence, and to at least maintain 
the residuals orthogonal to $A \mathcal{U}$. The new cycle can then be characterized 
as an augmenting Krylov subspace method with $\mathcal{U}$ as the augmenting space. The 
crucial point is that, as we will see in the next section, it is possible to construct this deflating  augmenting subspace in such a way that we do not need additional multiplications with $A$.

\subsection{Augmentation by Deflation}
We first describe the ideas and technicalities of deflated restarts for the 
non-flexible case. The extension to FGMRES will be given in section~\ref{defl_flex:sec}. 
So for now we assume we have a fixed preconditioner $M$ in every iteration, and to further 
simplifiy the notation we base our whole discussion on the non-preconditioned system $Ax = b$.
The (right) preconditioned case is obtained by merely replacing $A$ with $AM$ 
everywhere. 

The presence of eigenvector components corresponding to small eigenvalues in a residual
$r$ induces a large contribution of these eigenvectors in the error $e = A^{-1}r$. 
It is thus desirable to chose the augmenting subspace $\mathcal{U}$ as one that contains
approximations to eigenvectors belonging to small eigenvalues. Those can be approximated from
the Krylov subspace built up in the previous cycle. If this is done in the right manner, 
one can establish an Arnoldi type process for the augmented spaces with little additional 
cost as compared to standard Krylov subspaces, so that 
one ends up with an efficient method. We now give the details, following \cite{morgan-dr},
which in turn builds on \cite{ChapmanSaad,Morgan:GMRESDR,Morgan:GMRES-IR}.
The key ingredient is to use harmonic Ritz vectors \cite{PaPaVdVo:EigValBounds}.

\begin{defn}\label{def:ritz}
Given a matrix $A \in \cnn$ and a subspace $\mathcal{K} \subset \cn$, a {\em harmonic Ritz pair} 
$(\lambda,v)$ with 
$\lambda \in \mathC, v\in \mathcal{K}$ for $A$ with respect to $\mathcal{K}$  satisfies 
\begin{eqnarray} \label{harmRitz:eq}
Av - \lambda v &\bot& A \mathcal{K}.
\end{eqnarray}
\end{defn}

Harmonic Ritz pairs are well suited to approximate eigenpairs with a small eigenvalue 
\cite{PaPaVdVo:EigValBounds}. The following lemma shows how harmonic Ritz pairs can be computed 
in cases where $\mathcal{K}$ is a Krylov subspace or an appropriately augmented Krylov 
subspace. 

\begin{lemma} \label{harm_Ritz:lem}
Let $\mathcal{K}$ be a subspace of $\mathC^n$ of dimension $m$ and assume
that 
\[
A \mathcal{K} \subset \mathcal{K} \oplus \langle s \rangle \mbox{ for some vector } s \in \mathC^n.
\]
Let the columns $v_1,\ldots,v_{m+1}$ of $V_{m+1} \in \mathC^{n \times (m+1)}$ be
orthonormal vectors such that $v_{1}, \ldots, v_m$ span $\mathcal{K}$ and 
$v_1,\ldots, v_{m+1}$ span $\mathcal{K} \oplus \langle s \rangle$, so that there
exists a full rank matrix $\widehat{H}_{m} \in \mathC^{ (m+1) \x m}$  with
\begin{equation} \label{Arnoldi_Ritz:eq}
AV_m= V_{m+1} \widehat{H}_{m}.
\end{equation}
Moreover, let
\[
\widehat{H}_m = \left[ \begin{array}{c} H_m \\ h_m^H \end{array} \right]
\enspace \mbox{ with } H_m \in \mathC^{m \times m}, \, h_m \in \mathC^m.
\]
Then:
\begin{itemize}
\item[a)] The harmonic Ritz pairs of $A$ w.r.t.\ $\mathcal{K}$ are given 
as $(\lambda, V_m g)$, where $(\lambda,g)$ is an eigenpair
of the matrix 
\[
   H_m + f_m h_m^H,  \mbox{ where }  H_m^H f_m = h_m.
\]
\item[b)] The residual $Au - \lambda u$ of a harmonic Ritz pair satisfies
\[
    Au - \lambda u \in \langle r \rangle,
\]
where $r$ spans the orthogonal complement of $A\mathcal{K}$ in $\mathcal{K} \oplus \langle s \rangle$,
i.e.\
\[
A\mathcal{K} \oplus_\perp \langle r \rangle = \mathcal{K} 
\oplus \langle s \rangle.
\] 
\end{itemize}  
\end{lemma}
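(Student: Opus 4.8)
The plan is to turn the geometric condition \eqnref{harmRitz:eq} defining a harmonic Ritz pair into a matrix eigenproblem for the coordinate vector $g$ of $v=V_mg$, and then, for part b), to locate the residual inside the explicitly known one-dimensional orthogonal complement. First I would set $E_m=\left[\begin{smallmatrix}I_m\\0\end{smallmatrix}\right]\in\mathC^{(m+1)\times m}$ so that $V_m=V_{m+1}E_m$, and record two consequences of \eqnref{Arnoldi_Ritz:eq}: that $A\mathcal{K}=\range(AV_m)=\range(V_{m+1}\widehat{H}_m)$ has dimension exactly $m$ (here the full-rank hypothesis on $\widehat{H}_m$ together with $V_{m+1}^HV_{m+1}=I$ is used), and that for $v=V_mg\in\mathcal{K}$ the harmonic residual can be written as $Av-\lambda v=V_{m+1}(\widehat{H}_m-\lambda E_m)g$.

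For part a) I would impose $Av-\lambda v\perp A\mathcal{K}$. Since the columns of $V_{m+1}\widehat{H}_m$ span $A\mathcal{K}$, this is equivalent to $\widehat{H}_m^HV_{m+1}^HV_{m+1}(\widehat{H}_m-\lambda E_m)g=0$, and the orthonormality $V_{m+1}^HV_{m+1}=I$ collapses it to $\widehat{H}_m^H(\widehat{H}_m-\lambda E_m)g=0$. The two blocks are then evaluated from $\widehat{H}_m^H=[\,H_m^H\mid h_m\,]$, giving $\widehat{H}_m^H\widehat{H}_m=H_m^HH_m+h_mh_m^H$ and $\widehat{H}_m^HE_m=H_m^H$, so that $g$ solves the generalized eigenproblem
\[
(H_m^HH_m+h_mh_m^H)\,g=\lambda H_m^Hg.
\]
Assuming $H_m$ invertible, I would multiply through by $H_m^{-H}$ and set $f_m=H_m^{-H}h_m$ (equivalently $H_m^Hf_m=h_m$) to obtain the standard eigenproblem $(H_m+f_mh_m^H)g=\lambda g$. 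Reversing the steps yields the converse, establishing the claimed bijection between harmonic Ritz pairs $(\lambda,V_mg)$ and eigenpairs $(\lambda,g)$ of $H_m+f_mh_m^H$.

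For part b) the argument is short once the setup is in place. The residual $Au-\lambda u=V_{m+1}(\widehat{H}_m-\lambda E_m)g$ lies in $\range(V_{m+1})=\mathcal{K}\oplus\langle s\rangle$, while \eqnref{harmRitz:eq} states exactly that it is orthogonal to $A\mathcal{K}$. Because $A\mathcal{K}$ is $m$-dimensional inside the $(m+1)$-dimensional space $\mathcal{K}\oplus\langle s\rangle$, its orthogonal complement there is the one-dimensional space $\langle r\rangle$ fixed by $A\mathcal{K}\oplus_\perp\langle r\rangle=\mathcal{K}\oplus\langle s\rangle$. Hence $Au-\lambda u\in\langle r\rangle$, as claimed.

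The only genuinely delicate point is the passage from the generalized eigenproblem to the standard one in part a): it requires $H_m=V_m^HAV_m$ to be nonsingular so that $f_m$ is well defined. I expect this to be the main obstacle, because the full-rank hypothesis on $\widehat{H}_m$ alone does \emph{not} force $H_m$ to be invertible (singularity of $H_m$ corresponds to the degenerate situation of a nonzero $v\in\mathcal{K}$ with $Av\perp\mathcal{K}$). A clean sufficient condition is that $A$ be positive real, since then $H_m+H_m^H=V_m^H(A+A^H)V_m$ is positive definite and $H_m$ is nonsingular; otherwise I would carry the reduction under this nondegeneracy as a standing assumption. All remaining steps are routine linear algebra.
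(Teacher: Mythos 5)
Your proof is correct and follows essentially the same route as the paper's own: both convert the defining orthogonality \eqnref{harmRitz:eq} for $v=V_mg$, via \eqnref{Arnoldi_Ritz:eq} and the orthonormality of $V_{m+1}$, into the generalized eigenproblem $(H_m^HH_m+h_mh_m^H)g=\lambda H_m^Hg$, and then reduce it to the ordinary eigenproblem for $H_m+f_mh_m^H$ with $H_m^Hf_m=h_m$; part b) is settled in both cases by noting that the residual lies in $\mathcal{K}\oplus\langle s\rangle$ and is by definition orthogonal to $A\mathcal{K}$.

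The one point where you genuinely depart from the paper is the invertibility of $H_m$, and there your extra caution is warranted. The paper's proof dismisses the issue with the assertion ``Since $\widehat{H}_m$ has full rank, the matrix $H_m$ is non-singular,'' but this implication is false: a full-rank $(m+1)\times m$ matrix can have a singular leading $m\times m$ block, and this can occur under exactly the hypotheses of the lemma. For instance, let $A$ be the cyclic shift on $\mathC^3$ (so $Ae_1=e_2$, $Ae_2=e_3$, $Ae_3=e_1$) and $\mathcal{K}=\spann\{e_1,e_2\}$, hence $s=e_3$; then
\[
\widehat{H}_2 = \left[ \begin{array}{cc} 0 & 0 \\ 1 & 0 \\ 0 & 1 \end{array} \right]
\]
has full rank while $H_2$ is singular, the system $H_2^Hf_2=h_2$ has no solution, and in fact no (finite) harmonic Ritz pair exists at all. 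As you observe, singularity of $H_m=V_m^HAV_m$ is precisely the degenerate situation of a nonzero $v\in\mathcal{K}$ with $Av\perp\mathcal{K}$ (an ``infinite'' harmonic Ritz value), and it must be excluded by an explicit nondegeneracy hypothesis --- your standing assumption that $H_m$ is nonsingular, or a structural condition such as positive-realness of $A$ --- rather than derived from the full-rank hypothesis on $\widehat{H}_m$. So the ``delicate point'' you flag is a real gap, but it is a gap in the paper's stated justification, not in your argument; your write-up is in effect the corrected version of the paper's proof.
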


\begin{proof}
Using \eqnref{Arnoldi_Ritz:eq}, the defining property \eqnref{harmRitz:eq} for the Ritz 
pair $(\lambda,v)$ with $v = V_m g$ turns into
\begin{eqnarray*}
& & (AV_m)^H(Av-\lambda v) = 0 \\
\Leftrightarrow & & (V_{m+1}\widehat{H}_m)^H(V_{m+1}\widehat{H}_m g - \lambda V_m g) = 0 \\ 
\Leftrightarrow & & \widehat{H}_m^H \left( \widehat{H}_m g -  \left[ \begin{array}{c} \lambda I_m \\ 0
    \end{array} \right] g \right) = 0.
\end{eqnarray*} 
Since $\widehat{H}_m$ has full rank, the matrix ${H}_m$ is non-singular. The last equality can thus be
further reduced to
\begin{eqnarray*}
& & \left(H_m^HH_m + h_m h_m^H \right) g - \lambda H_m^H g = 0 \\
\Leftrightarrow && \left(H_m + f_m h_m^H \right) g - \lambda  g = 0,  \mbox{ where } 
 H_m^H f_m = h_m,
\end{eqnarray*}
which shows a). Part b) is trivial, since the residual of any harmonic Ritz pair is in
$A\mathcal{K}+ \langle s \rangle$ and is orthogonal to $A\mathcal{K}$.
\end{proof}

So, to obtain harmonic Ritz pairs, we first compute $f_m$ as the solution of an $m\times m$ 
linear system and then compute the eigenpairs of $H_m + f_m h_m^H$. 

Note that by \eqnref{gen_minimal_residual:eq} the vector $r$ from part b) of Lemma~\ref{harm_Ritz:lem}
is also the residual of the GMRES iterate with respect to the search space $\mathcal{K}$. Since 
in the situation of Lemma~\ref{harm_Ritz:lem} we have 
\begin{equation} \label{harm_U:eq}
A \mathcal{U} \in \mathcal{U} + \langle r \rangle
\end{equation}
for any subspace $\mathcal{U}$ spanned by some harmonic Ritz vectors, we see that the augmented 
Krylov subspaces $K_j(A,r,\mathcal{U}) := \mathcal{U} + K_j(A,r)$ satisfy
\begin{equation} \label{harm-nested:eq} \nonumber
A \cdot ( \mathcal{U} + K_j(A,r) ) \subset \mathcal{U} + K_{j+1}(A,r).
\end{equation}
This has two major consequences. The first is that, starting from an orthonormal basis of 
$K_1(A,r,\mathcal{U})$, we can build nested orthonormal bases for the spaces $K_j(A,r,\mathcal{U})$
in an Arnoldi type manner. The second is that we can 
compute harmonic Ritz vectors with respect to $K_m(A,r,\mathcal{U})$  in the way given in 
Lemma~\ref{harm_Ritz:lem}. 

This is why we are able to set up a restarted GMRES method with deflation of harmonic Ritz vectors,
where each cycle consists of the following steps: 

\begin{enumerate}
  \item Extract $k$ harmonic Ritz pairs from the ``search space'' $\mathcal{U}^{pr}+K_m(A,r^{pr})$ 
        built in the previous cycle; 
        these harmonic Ritz vectors span the current augmenting subspace $\mathcal{U}$.
  \item With $r$ denoting the residual of the iterate $x$ at the beginning of the current cycle,
        compute nested orthonormal bases $v_1,\ldots,v_{k+j}$ for $K_j(A,r,\mathcal{U}), j=1,\ldots,m,$
        such that for $V_{k+j}= [v_1|\ldots|v_{k+j}]$ we have the Arnoldi type relation
        \begin{equation} \label{Arnoldi-type:eq}
             AV_{k+j} = V_{k+j+1} \widehat{H}_{k+j}, \, \widehat{H}_{k+j}\in \mathC^{(k+j+1) \times (k+j)}.
        \end{equation}
 \item  Compute the current GMRES iterate and its residual using \eqnref{Arnoldi-type:eq}.
\end{enumerate}

Here, as in the sequel, we use the superscript $pr$ to denote quantities from the previous cycle; 
no superscript refers to the current cycle. For an efficient implementation, all three steps 
are coupled by the way in which we construct the orthogonal basis for the current search space 
$K_j(A,r,\mathcal{U})$. We now discuss the three steps in detail.

\paragraph{Extracting the Ritz pairs}
Let $V^{pr}_{\widetilde{m}}, \widetilde{m}=k+m,$ denote the matrix containing the orthonormal basis vectors 
of the search space built up in the previous cycle. By Lemma~\ref{harm_Ritz:lem}a),
the $k$ harmonic Ritz vectors $u_\ell, \ell = 1,\ldots,k$ to augment the current Krylov subspace 
can be computed using $\widehat{H}_{\widetilde{m}}^{pr}$ from the Arnoldi-type relation \eqnref{Arnoldi-type:eq}
of the previous cycle. The harmonic Ritz vectors are then given as 
\[
u_\ell = V_{\widetilde{m}}^{pr} g_\ell, \; g_\ell \in \mathC^{\widetilde{m}}, \, \ell = 1,\ldots,k,
\]
which we summarize as
\begin{equation} \label{harm_G:eq}
   U_k := [u_1|\ldots|u_k] = V_{\widetilde{m}}^{pr} G_k, \enspace G_k = [g_1| \ldots |g_k].
\end{equation}

\paragraph{Computing the next GMRES iterate and its residual.} Assume that we have already built 
$V_{k+m+1}$, the columns which represent an orthonormal basis of the current search space.
We have to compute
$\eta \in \mathC^{k+m}$ such that $\| b-A(x+V_{k+m}\eta)\|$ is minimal. 
The residual $r=b-Ax$ of the current iterate, which is in $K_{1}(A,r,\mathcal{U}^{pr})$ and thus 
in $\range(V_{k+1})$,
can be expressed as 
\[
r = b-Ax = V_{k+m+1} c, \enspace \mbox{where } c =  \left[ \begin{array}{c} V_{k+1}^H r
 \\  0 \end{array} \right].
\]
Taking \eqnref{Arnoldi-type:eq} as granted for the moment, we have
\[
  b-A(x+V_{m+k}\eta) =  V_{m+k+1}\left( c - \widehat{H}_{m+k} \eta \right).
\]
Since the columns of $V_{m+k+1}$ are orthonormal, minimizing $b-A(x+V_{m+k}\eta)$ is
therefore equivalent to solving the small $(m+k+1) \times (m+k)$ least squares problem
\[
  \eta = \argmin  \left\| c
-\widehat{H}_{m+k} \eta  \right\|.
\]
This gives the next GMRES iterate $x^{next} = x + V_{m+k} \eta$ with residual
\begin{equation} \label{r:eq}
r^{next} = V_{m+k+1}(c- \widehat{H}_{m+k} \eta).
\end{equation}

\paragraph{Building the orthonormal basis for the current search space}
The crucial part here is to obtain an orthonormal 
basis for $K_1(A,r,\mathcal{U})$ without investing multiplications with $A$.

From the columns of $V_{\widetilde{m}}^{pr}$ being orthonormal, we obtain an orthonormal basis 
$v_1,\ldots,v_
k$ of $\mathcal{U}$ by computing an orthonormal
basis for $\range G_k$ from \eqnref{harm_G:eq}, i.e.\ by computing the 
QR-factorization $G_k = Q_k R_k$, where $Q_k \in \mathC^{\widetilde{m} \times k}$ 
has orthonormal columns, and putting
\[
{V}_k := [{v}_1 | \ldots | {v}_k]  =  V_{\widetilde{m}}^{pr} Q_k.
\] 
From \eqnref{rperp:eq} we have $A\mathcal{U} \subset \mathcal{U} + \langle r \rangle$, 
where $r$ is the residual of the iterate at the beginning of the current cycle, 
$r = V^{pr}_{\widetilde{m}+1} \left( c^{pr} - \widehat{H}_{\widetilde{m}}^{pr} \eta^{pr} \right)$ according to \eqnref{r:eq}.
Defining $G_{k+1}\in \mathC^{(\widetilde{m}+1) \x (k+1)}$ as
\begin{equation}  \label{G_kplus1:eq}
G _{k+1} = \left[ \begin{matrix}
	\begin{matrix} G_k \\ 0\end{matrix} & c^{pr} - \widehat{H}_{\widetilde{m}}^{pr}\eta^{pr}
\end{matrix} \right],
\end{equation}
we thus have that $\range(V^{pr}_{\widetilde{m}+1} G_{k+1}) = \mathcal{U} + \langle r \rangle$.
So we can extend $v_1,\ldots,v_k$ to an orthonormal basis $v_1,\ldots,v_{k+1}$ of
$\mathcal{U} + \langle r \rangle$ by extending the QR-factorization $G_k = Q_k R_k$ 
to a QR factorization $G_{k+1} = Q_{k+1} R_{k+1}$ where
\[
Q_{k+1} = \left[ \begin{array}{cc}
               Q_k & * \\
               0   & *
            \end{array}
          \right] \in \mathC^{(\widetilde{m}+1) \times (k+1)}. 
\]
This gives the relation
\begin{equation}\label{tildeV_Vplus1:eq}
{V}_{k+1} = [{v}_1| \ldots| {v}_{k} | {v}_{k+1} ] = V^{pr}_{\tilde{m}+1}Q_{k+1}.
\end{equation} 

On the other hand, by \eqnref{harm_U:eq} for $j=0$ there exists a matrix
$\widehat{H}_k  \in \mathC^{(k+1) \times k}$ such that 
\begin{equation} \label{tilde_Arnoldi:eq} 
AV_{k} = V_{k+1} \widehat{H}_{k}.
\end{equation} 
This matrix is actually given as  
\begin{equation} \label{hatHk:eq}
\widehat{H}_k = Q_{k+1}^H \widehat{H}_{\widetilde{m}}^{pr} Q_k,
\end{equation}
which can be seen by comparing
\begin{eqnarray*}
	A {V}_{k} &=& A V_{\widetilde{m}}^{pr}Q_k \, = \, V_{\widetilde{m}+1}^{pr} \widehat{H}_{\widetilde{m}}^{pr} Q_{k} 
\end{eqnarray*}
which follows from the Arnoldi relation of the previous cycle and 
\[
 A {V}_{k} \, = \, {V}_{k+1} \widehat{H}_{k} \, = \, 
V_{\widetilde{m}+1}^{pr} Q_{k+1} \widehat{H}_{k},
\]
which is due to \eqnref{tildeV_Vplus1:eq} and \eqnref{tilde_Arnoldi:eq}. We have therefore shown \eqnref{Arnoldi-type:eq} for $j=0$. 

The Arnoldi-type relation \eqnref{tilde_Arnoldi:eq} for an orthonormal basis
of $\mathcal{U} + K_1(A,r)$ can now be extended to an orthonormal basis of 
$\mathcal{U} + K_j(A,r)$ through the standard Arnoldi orthogonalization procedure, where  
in step $j$ we orthogonalize $Av_{k+j}$ against all previous basis vectors. As a consequence,
the matrices $\widehat{H}_{k+j}$ in the resulting Arnoldi-type relation 
\eqnref{Arnoldi-type:eq} have the form
\[
\widehat{H}_{k+j} = \left[ \; \; \begin{matrix}
                             \begin{matrix}\widehat{H}_k  
                             \end{matrix} \hfill \vline & \begin{matrix} h_{1,k+1} & \ldots & \ldots & h_{1,k+j} \\
                                             \vdots    &  \vdots & \vdots & \vdots \\
                                             h_{k+1,k+1} & \ldots & \ldots & h_{k+1,k+j} 
                             \end{matrix} \\ \cline{1-2} 
                             0_{j\times k}
                               \hfill \vline   & \begin{matrix} h_{k+2,k+1} & \ldots & \ldots & h_{k+2,k+j}\\
                                               0        & \ddots & \vdots      & h_{k+3,k+j} \\
                                               \vdots   & \ddots  & \ddots     & \vdots \\ 
                                               0        & \ldots  & 0          & h_{k+j+1,k+j}
                                            \end{matrix}
                            \end{matrix} \; \;
                        \right] \in \mathC^{(k+j+1) \times(k+j)},
\]
which has upper Hessenberg structure except for the first $k+1$ rows. Herein, $\widehat{H}_k$ is from
\eqnref{hatHk:eq}; for the other entries $h_{i\ell}$ see Algorithm~\ref{fgmresdr:alg} below.

\subsection{Deflating flexible GMRES} \label{defl_flex:sec}

If we use a constant preconditioner $M$, the deflating procedure described so far can be directly extended to right preconditioned GMRES by just changing the operator from $A$ to $AM$.
Note that we now compute harmonic Ritz values and vectors for $AM$ rather than $A$. 

As was pointed out in \cite{pinel:phd,GirGrattPinVas:FGMRESDR}, going from constant to varying preconditioning, 
is conceptionally almost trivial: 
For each $j$ the preconditioner $M_j$ occurs only once in the matrix-vector product
$z_j = M_j v_j$ of the flexible Arnoldi process. Since the vectors $v_j$ are linearly independent---they are even mutually orthogonal---there exist a constant matrix $M$ for which
$Mv_j = M_jv_j$ for $j=1,\ldots,m$. We don't have to know $M$ explicitly, all we need is its action on the $v_j$ which is given through the variable preconditioning. So we can perform the deflation procedure in exactly the same way as in the case of a constant preconditioner.   

Algorithmically, we have to take care of storing the preconditioned vectors $z_j = M_jv_j$ explicitly, as it is done in the flexible Arnoldi process.
The resulting deflated, flexible restarted GMRES method (FGMRES-DR) is formulated as
Algorithm \ref{fgmresdr:alg}.

\begin{algorithm}
    \LinesNumbered
    \DontPrintSemicolon
  \SetKwInOut{Input}{Input}
  \SetKwInOut{Output}{Output}
  \caption{F-GMRES-DR } \label{fgmresdr:alg} 
  \Input{$A\in \cnn$, $b \in \cn$, restart length $m$, dimension of augmenting (harmonic Ritz)
    subspace $k$}
  
  \BlankLine
  choose initial guess $x_{0}$, compute $r_0 = b-Ax^0$\;
  \tcc*[f]{first cycle: flexible GMRES, no deflating subspace}\;
  Perform $m$ steps of the flexible Arnoldi process starting with $r_0$. 
  This gives $Z_m, V_{m+1}, \widehat{H}_m$ from \eqnref{arnoldilike:eq}.\;
  Put $c = \|r_0\| e_1 \in \mathC^{m+1}$. \;  
  Compute  $\eta = \argmin_{\xi \in \mathC^{m}} \| c - \widehat{H}_{m} \xi\|$ \;
  Obtain GMRES iterate $x = x_0 + Z_m \eta$ and residual $r = r_0 - V_{m+1}\widehat{H}_m\eta$\;
   \Repeat(\tcc*[f]{all other cycles}){convergence}{   
        \tcc*[f]{get augmenting (harmonic Ritz) subspace }\;
	Compute all ${m}$ eigenpairs $(\lambda_i,g_i)$ of 
        $H_{m} + f_{m} h^H_{m}$ \tcc*[f]{Lemma~\ref{harm_Ritz:lem}a)} 
        \;
        Identify the $k$ smallest (in modulus) harmonic Ritz values $\lambda_i$, 
        collect the corresponding Ritz vectors $g_i$ as columns of $G_k 
        \in \mathC^{m \times k}$. \;
        \tcc*[f]{initialization for augmented Arnoldi process}   \;
        Build $G_{k+1} = \left[ \begin{matrix}
	\begin{matrix} G_k \\ 0\end{matrix} & c - \widehat{H}_{m}\eta
\end{matrix} \right]$ from \eqnref{G_kplus1:eq} and compute its QR-fac\-tori\-za\-tion
        $G_{k+1} = Q_{k+1} R_{k+1}$. \;
        Update $\widehat{H}_k := Q_{k+1}^H \widehat{H}_{m} Q_k$, where $Q_k$ results 
        from $Q_{k+1}$ by deleting its last row and column. \;
        Update $V_{k+1} = V_{m+1}Q_{k+1}, Z_{k} = Z_{m}Q_{k}$. \;
                \tcc*[f]{Remaining part of augm.\  flex. Arn.} \;
        \For{$j=k+1,\ldots, m$}{
            $z_j:=M_j v_j$   \;
            $w:= Az_j$ \;
            \For{$i=1,\ldots,j$}{
      	        $h_{ij}:= v_i^Hw$ \;
      	        $w =w-h_{ij} v_i$ \;
            }
           $h_{j+1, j}:=\|w\|$ \;
           $v_{j+1}:= w /h_{j+1,j}$ \;
        }
        \tcc*[f]{express (old) residual in terms of new basis} \;
        Compute $c = \left[ \begin{array}{c} V_{k+1}^H r
 \\  0 \end{array} \right] \in \mathC^{m+1}$ 
        \tcc*[f]{so $r=V_{m+1} c$} \;
        \tcc*[f]{Now obtain iterate and residual for this cycle} \;
        Compute  $\eta = \argmin_{\xi \in \mathC^{m}} \| c - \widehat{H}_{m} \xi\|$ \;
        Update GMRES iterate $x = x + Z_{m} \eta$ and corresponding residual $r = r - V_{m+1}\widehat{H}_{m}\eta$\; 
	}
\end{algorithm}

\subsection{Computing the implicit norm of the residual within GMRES-DR} \label{residual:sec}

As noted in section \ref{fgmres:subsec}, the QR factorization of $H$ can be updated at each step using Givens rotations. 
The \emph{implicit} norm of the residual is easily computed by applying in sequence the rotations to the right hand side $c$ of the little least squares problem and computing the absolute value of the last element of $c' = Q^Hc$. 
The set of rotations form the unitary matrix $Q^H$ such that $Q^H_{m+1}H_{m+1} = R_{m}$ with $R_m$ triangular. 
The application of $Q^H_{m+1}$ to $c$ is easily seen as a change of basis. 
This new basis is composed by a set of orthonormal vectors which spans the subspace inverted by the iteration, plus a vector which is not inverted. 
The first $m$ vectors of the new basis $\tilde{V}_{m+1}$ satisfy $AZ_m=\tilde{V}_mR$. The $m+1$ element of $c'$ is the component of $b$ on the non-inverted, normalized vector $\tilde{v}_{m+1}$.
In GMRES-DR the matrix $H$ is not Hessenberg. It is however possible to compute the implicit norm of the residual by computing a QR factorization of the non-Hesseberg submatrix of $H$ and then updating the QR factorization by Givens rotations.

\subsection{Mixed precision}

On most of modern CPUs, GPUs as well as on the CELL processor, single precision arithmetic is twice as fast as
double precision arithmetic. The natural way of exploiting the possible acceleration provided 
by single precision arithmetic units, while aiming at a full double precision result when solving a linear system, is by using the technique of \emph{iterative refinement}, see, e.g., \cite{Demmel06}.

Given two different available machine precisions, a high precision $p_h$ and a low precision $p_l$, iterative refinement 
is based on the algorithmic principle described in Algorithm~\ref{iter_refi:alg}.

\begin{algorithm}
    \LinesNumbered
    \DontPrintSemicolon
  \SetKwInOut{Input}{Input}
  \SetKwInOut{Output}{Output}
  \caption{Iterative Refinement} \label{iter_refi:alg} 
  
  \BlankLine
  choose initial guess $x_{0}$  \tcc*[f]{precision $p_h$} \;
  compute $r_0 = b-Ax^0$ with precision $p_h$\;
  \Repeat(\tcc*[f]{refinement cycles}){convergence}{   
  convert $r_0$ to precision $p_l$,   $r \leftarrow r_0$\;
  solve $Ax=r$ up to precision $p_l$ using $p_l$\;
  convert $x$ to precision $p_h$ and update $x_{0} \leftarrow x_{0} + x$\;
  compute $r_0 = b-Ax^0$ with precision $p_h$;\
  } 
\end{algorithm}

In practice, where we work with IEEE single ($p_l = 2^{-24} \approx 10^{-7}$) and double ($p_h = 2^{-53} \approx 10^{-16}$) precision and where we aim at a final norm of the residual of the order of $10^{-12}$, for example, two cycles are usually sufficient to obtain the desired accuracy. However, if the solver 
used in each refinement cycle is deflated flexible GMRES, we may encounter problems if we want to use the
harmonic Ritz vectors from the last GMRES cycle of the current refinement cycle as a deflating subspace
for the first GMRES cycle in the next refinement cycle: Upon convergence in low precision, the updated, low precision residual obtained via FGMRES-DR and the explicitly computed high precision residual may differ substantially. Then, including the more precise, explicitly computed high precision residual into the next deflated GMRES cycle is not possible, even when converted to low precision, because the fundamental relation \eqnref{harm_U:eq} between the residual and the harmonic Ritz vectors, which is at the heart of the efficient use of deflation in FGMRES-DR, is lost.      

To cope with this situation, we developed the following approach: Our refinement cycles correspond one-to-one to the
restart cycles of FGMRES-DR, so that we have more than just the 2 or 3 usual refinement cycles. 
Still, of course, the overwhelming part of all computation is done in low precision arithmetic.
 We then recompute the residual $r=b-Ax$ of the current iterate $x$ explicitly in high precision 
and convert it back to low precision. This explicitly computed residual $r$ is used in  Algorithm~\ref{fgmresdr:alg} 
to obtain the right hand side $c$ of the least squares problem that has to be solved in each cycle of 
FGMRES-DR (last 4 lines), and for $c - \widehat{H}_{m}\eta$ when obtaining the matrix $G_{k+1}$ at the beginning of the repeat-loop. We use $r$ also when updating the 
last vector of $V_{k+1}$. This last vector corresponds to the residual after orthonormalization against the updated $V_{k}$.
Our approach can thus be regarded as an attempt to sneak in exact residual information via the right hand side of the 
least squares problem and one vector of the subspace while at the same time maintaining the crucial 
relation \eqnref{harm_U:eq} although it will probably not hold exactly after using $r$ to get  $v_{k+1}$.
  
This approach works satisfactorily well in practice. Nevertheless, it may still happen that after some cycles the updated and the 
explicitly recomputed residual differ by more than a small amount. In this case, we perform a ``clean'' restart
of the method, i.e.\ we start Algorithm~\ref{fgmresdr:alg} from scratch with a first cycle that does not use any deflating subspace. 
As a criterion for triggering the clean restart we use $\|r^e\|/\|r^i\| > t$, with $r^e$ being the explictly 
recomputed residual, $r^i$ being the implicit residual computed as described in section \ref{residual:sec}, and $t$ being a 
tunable threshold.
From our experiments, $2\leq t \leq 10$ is a reasonable search range. 

\section{Numerical results} \label{example:sec}
We now report results of several numerical experiments which illustrate the impact of including deflation into the restarted flexible GMRES method. Our target systems are dynamically produced Wilson-Dirac operators with clover improvement \cite{Sheikholeslami:1985ij}. We start with the results of a series of experiments for a relatively small $16^3 \times 32$ lattice. We used a thermalized configuration obtained from a dynamical simulation
at $\beta = 5.29$ and $\kappa_{\rm sea} = 0.135$. The value of $c_{sw}$ in the clover improvement was $1.91$; the resulting critical value for the hopping parameter $\kappa$ was $\kappa_c \approx 0.13707$. 

Our implementation used 4 nodes of QPACE, i.e.\ 32 CELL SPU cores in total. For the domain decomposition, we divided the lattice into 64 sublattices of size $4^3 \times 8$ each, so that each of the 32 cores is assigned $8$ of these sublattices. We always performed 8 SAP-cycles for preconditioning, and the solves for the subdomains were done approximately via 5 steps of MR, independently of the subdomain and of the accuracy of the solution thus obtained. This choice of parameters was found to be the best by an extensive  numerical study.

Figure~\ref{1368:fig} shows the results for flexible GMRES without deflation (left) and with deflation (right) 
for various choices of the cycle length $m$. Here, as everywhere, we plot the 2-norm of the residual against
the number of matrix-vector multiplications invested, i.e.\ for every ``interior'' step of the restarted GMRES cycle. In these experiments the hopping parameter was set to $\kappa = 0.1368$, which is 
still quite far from $\kappa_c$, so that we may consider the system as relatively well conditioned. The figures illustrate the fact that a larger value of the cycle length $m$ results in fewer iterative steps. Deflation results in fewer iterations when the cycle length is fixed. Note that the arithmetic cost 
related to the Arnoldi process grows like $nm^2$ due to the orthogonalizations, so larger cycle lengths 
$m$ can significantly increase the computing time. This being said, Figure~\ref{1368:fig} shows that by 
investing just 4 vectors for deflation, we get a better performance by using a restart value $m=16$ than by 
using a restart value $m$ twice as large and no deflation.


\begin{figure}
\centerline{\includegraphics[width = 0.62\textwidth]{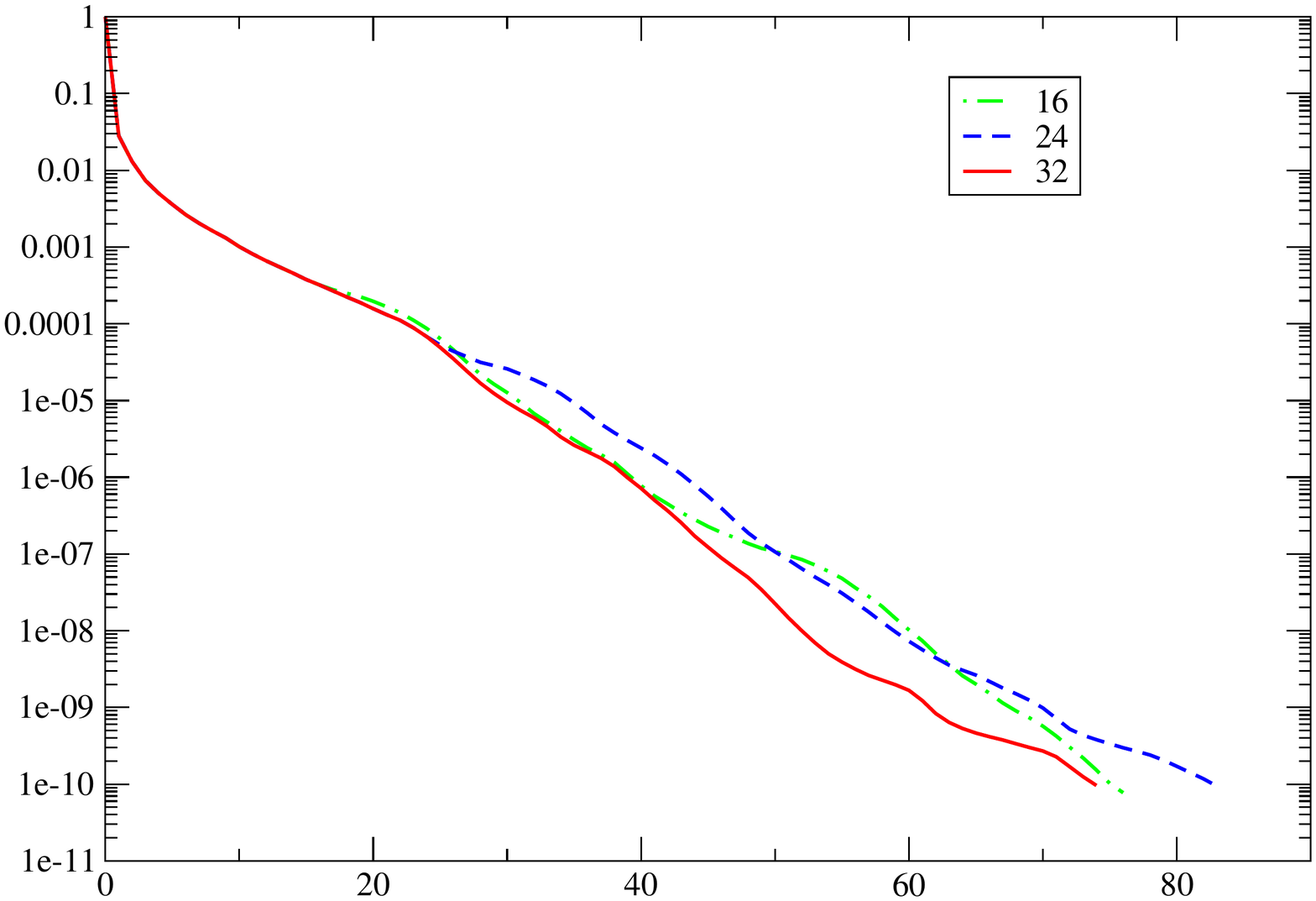} \hfill
\includegraphics[width = 0.62\textwidth]{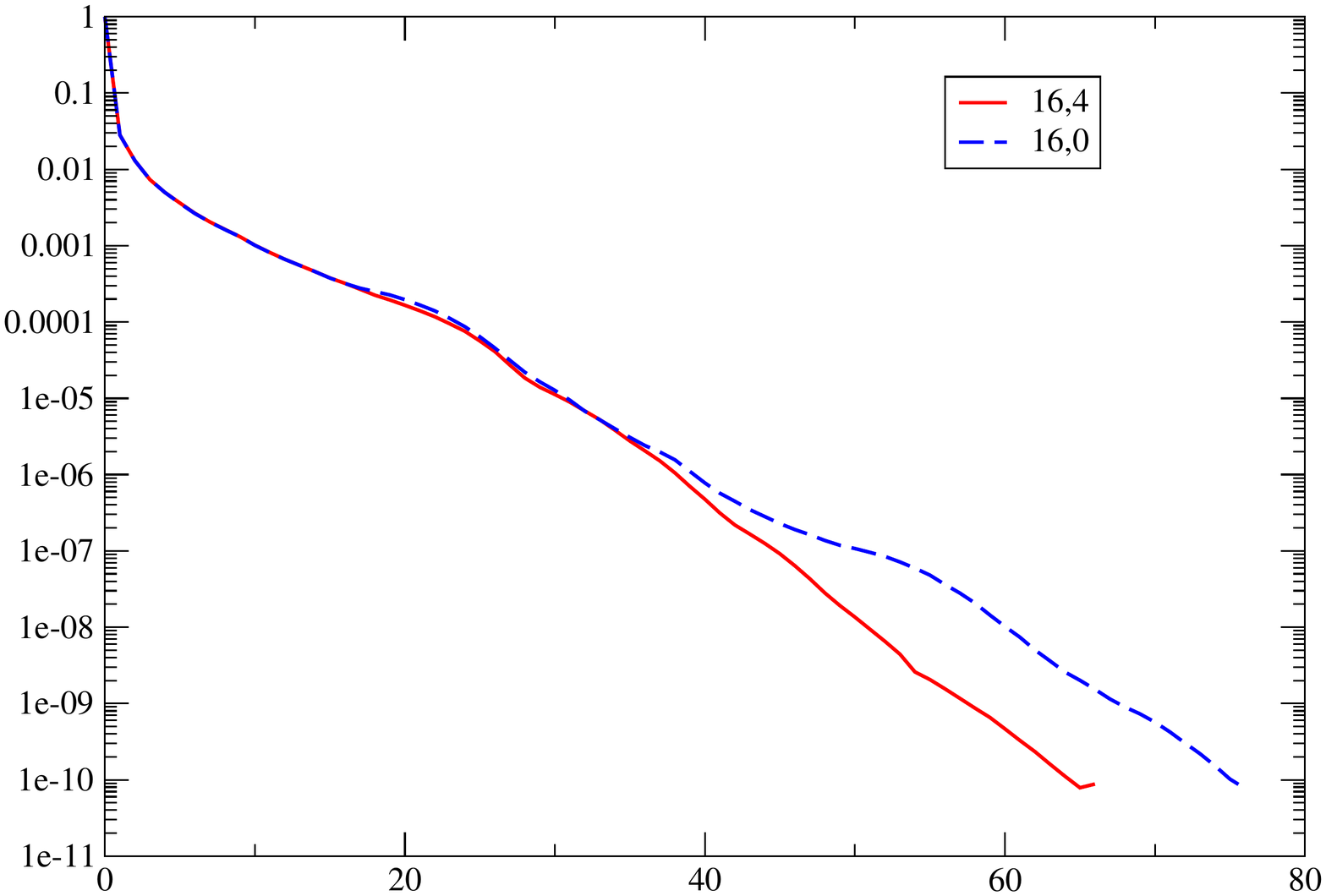}
}
\caption{Results without (left) and with (right) deflation. $16^3 \times 32$ configuration, $\kappa_c \approx 0.13707$,
$\kappa = 0.1368$. The legend on the left figure refers to the cycle length $m$; in the right figure the pair $(m,k)$ denotes the cycle length $m$ and the size of the deflating subspace $k$.\label{1368:fig}} 
\end{figure}

Figure~\ref{1371:fig} presents similar experiments for a different value of the hopping parameter, 
$\kappa = 0.1371$, which is very close to the critical value. Now the system is much more ill conditioned, and 
we see that deflation starts to have an even more significant impact on the performance of the method.
For example, reserving $k=4$ vectors for deflation while using a restart value of $m=16$ reduces the required 
iterations from about $450$ to $115$. The best deflated method uses $m=20$ and deflates $k=8$ vectors. It 
requires roughly only half as many iterations as the best non-deflated method which requires a much larger 
cycle length, $m=32$. 
In some of the residual plots in the figure to the right we observe an increase of the residual around iterations 
50-70 for some choices of the parameters. At these points we had to do a ``clean'' restart of FGMRES-DR, and 
the jump in the residual indicates the that the explicitly computed residual is considerably larger than the 
updated one. Since a clean restart discards all subspace information acquired so far, the subsequent GMRES cycle 
is comparably as slow as the very first cycle of the iteration. This shows that it might be advisable to use quite 
small values for the subspace dimension $k$, since in these cases a clean restart was never necessary, thus 
resulting in the fastest overall convergence. 

\begin{figure}
\centerline{\includegraphics[width = 0.62\textwidth]{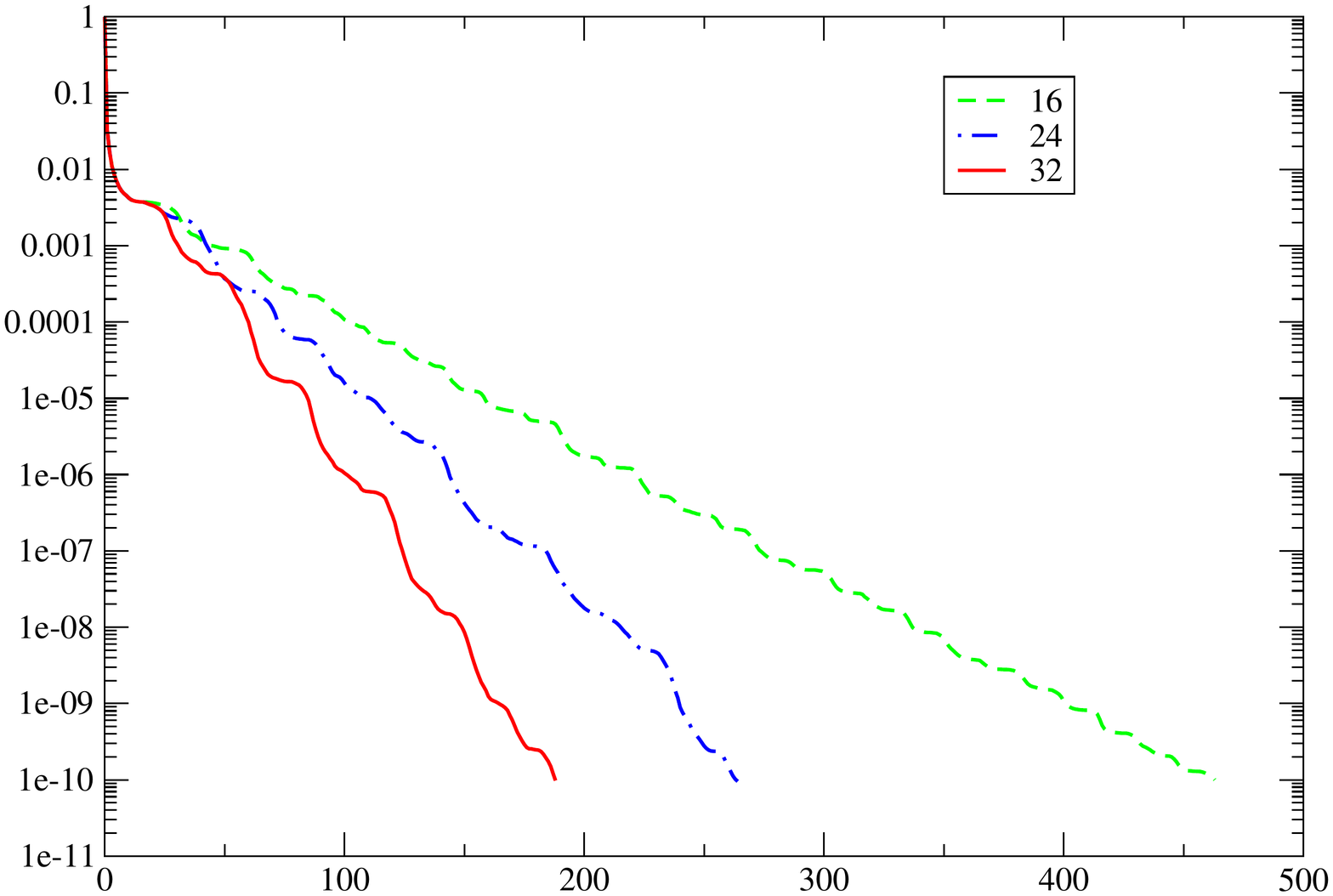} \hfill
\includegraphics[width = 0.62\textwidth]{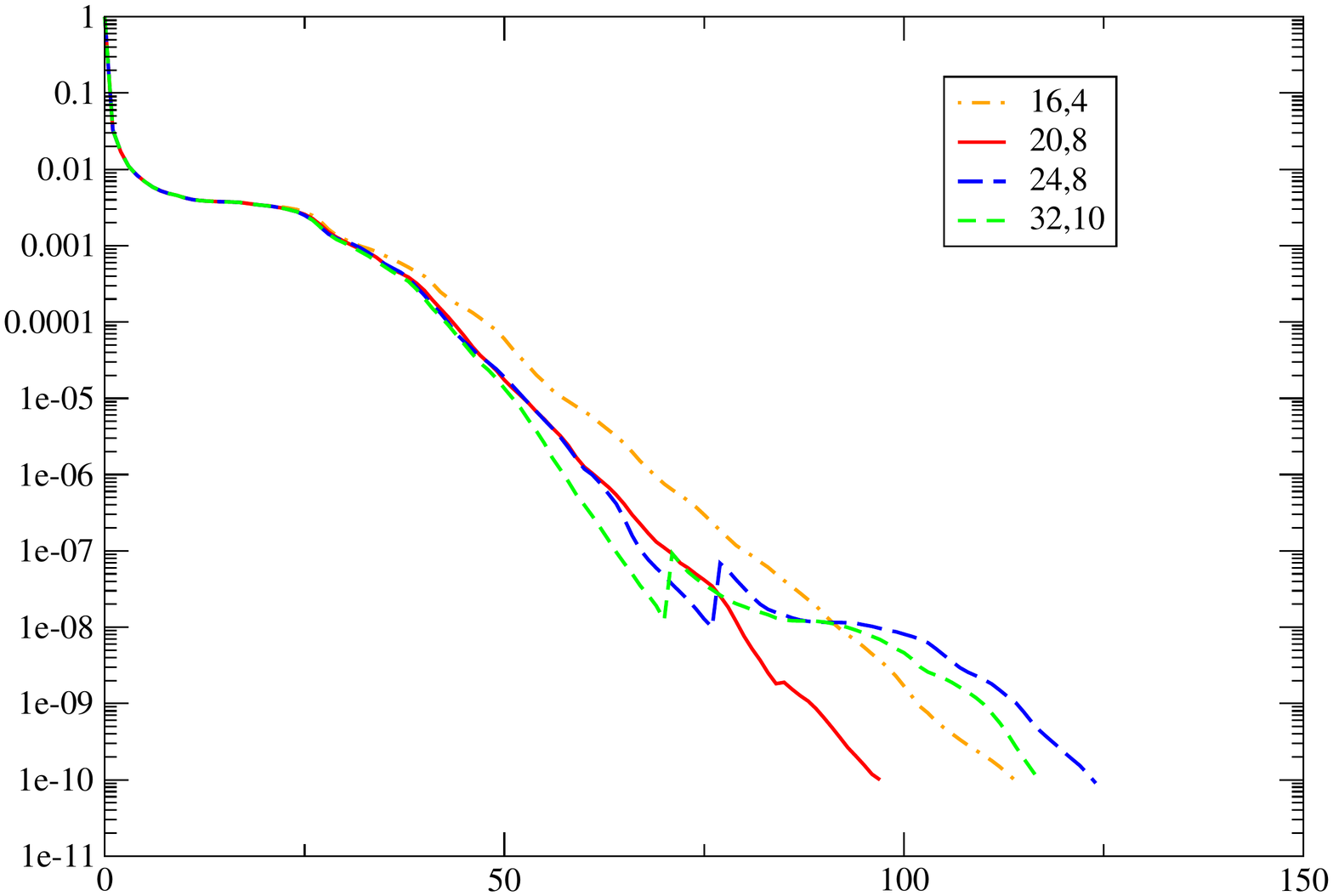}
}
\caption{Results without (left) and with (right) deflation. $16^3 \times 32$ configuration, $\kappa_c \approx 0.1371$,
$\kappa = 0.1371$. \label{1371:fig}} 
\end{figure}

\begin{figure}
\centerline{\includegraphics[width = 0.62\textwidth]{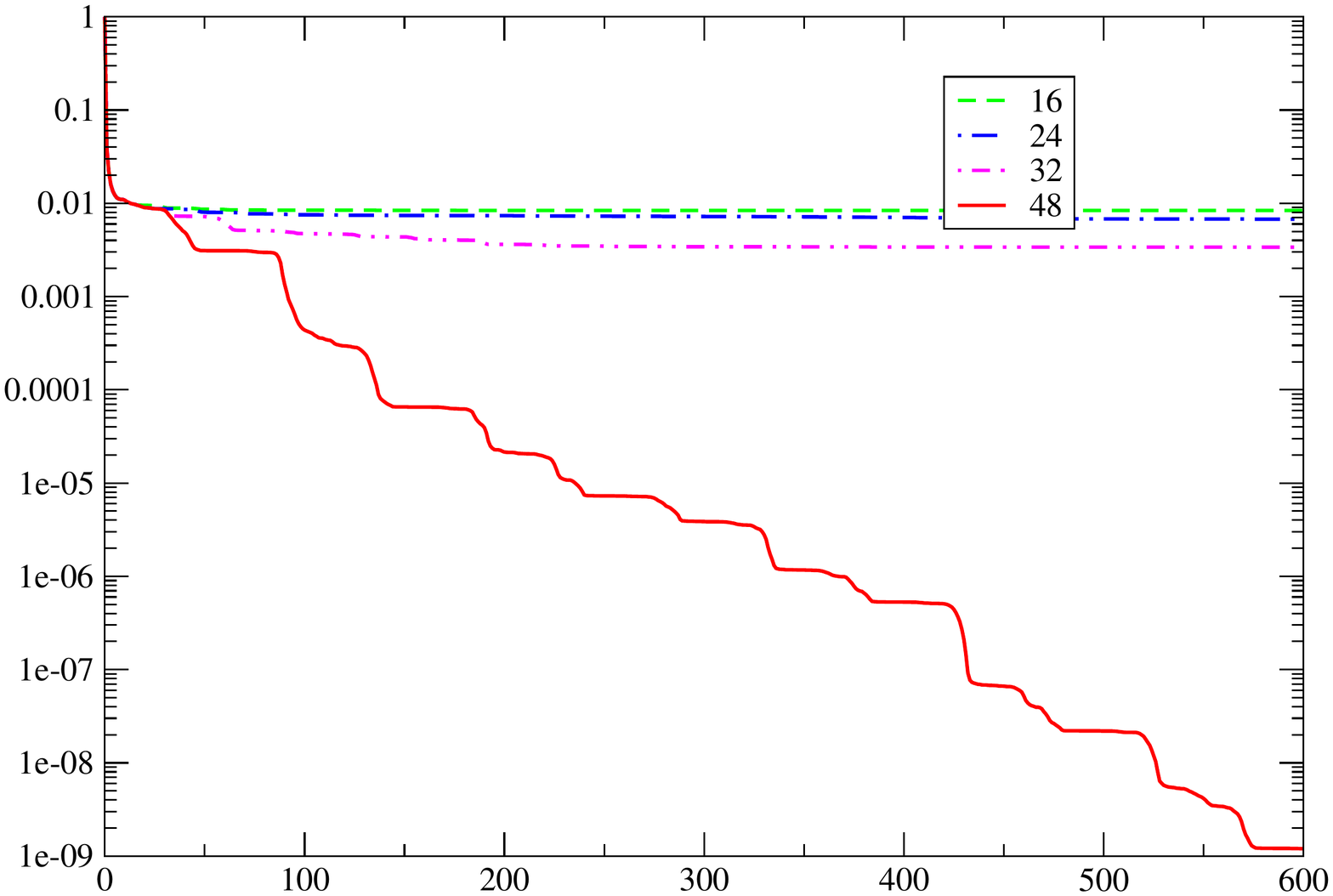} \hfill
\includegraphics[width = 0.62\textwidth]{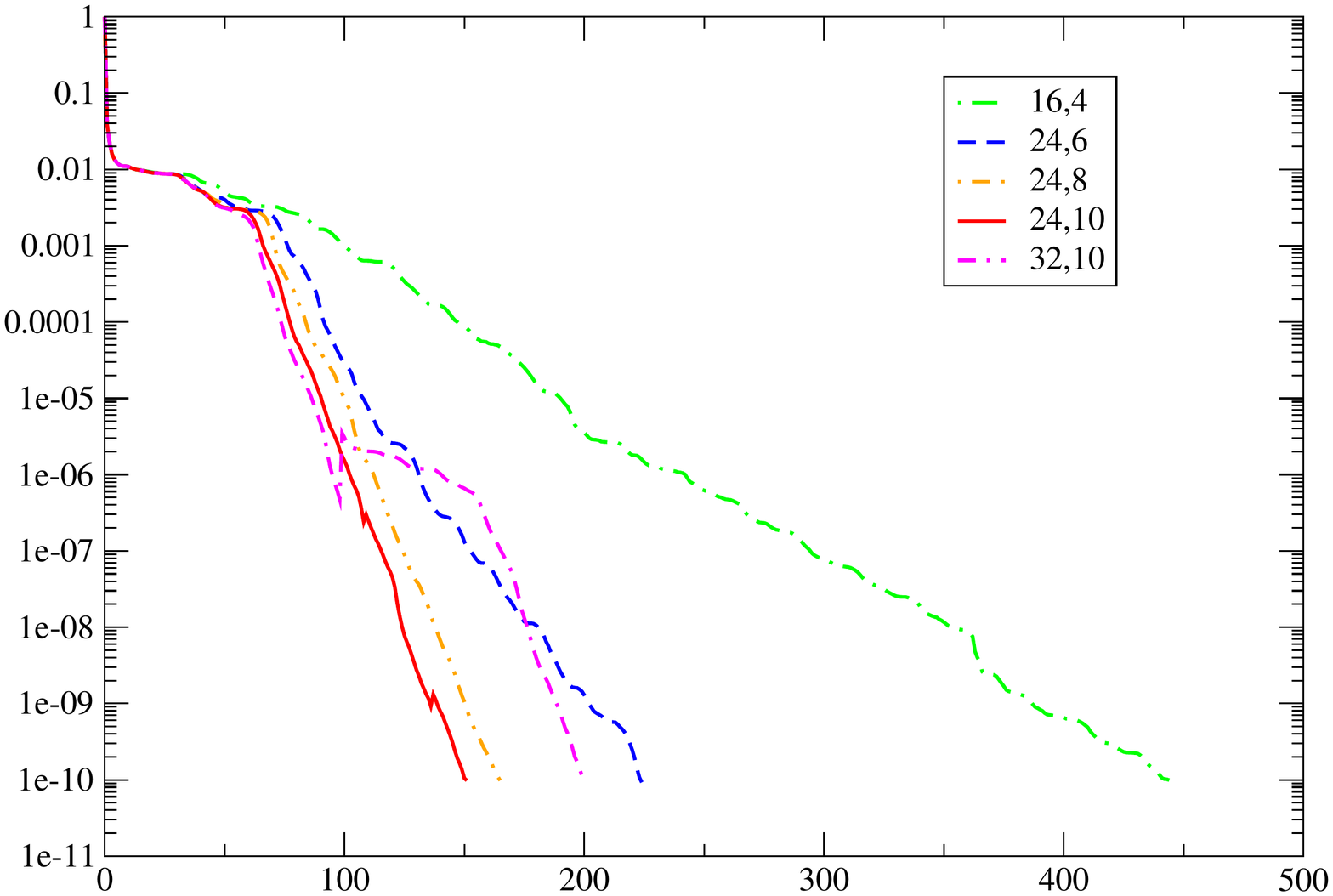}
}
\caption{Results without (left) and with (right) deflation. $16^3 \times 32$ configuration, $\kappa_c \approx 0.1371$,
$\kappa = 0.1374$.\label{1374:fig}} 
\end{figure}

\begin{figure}
\centerline{\includegraphics[width = 0.62\textwidth]{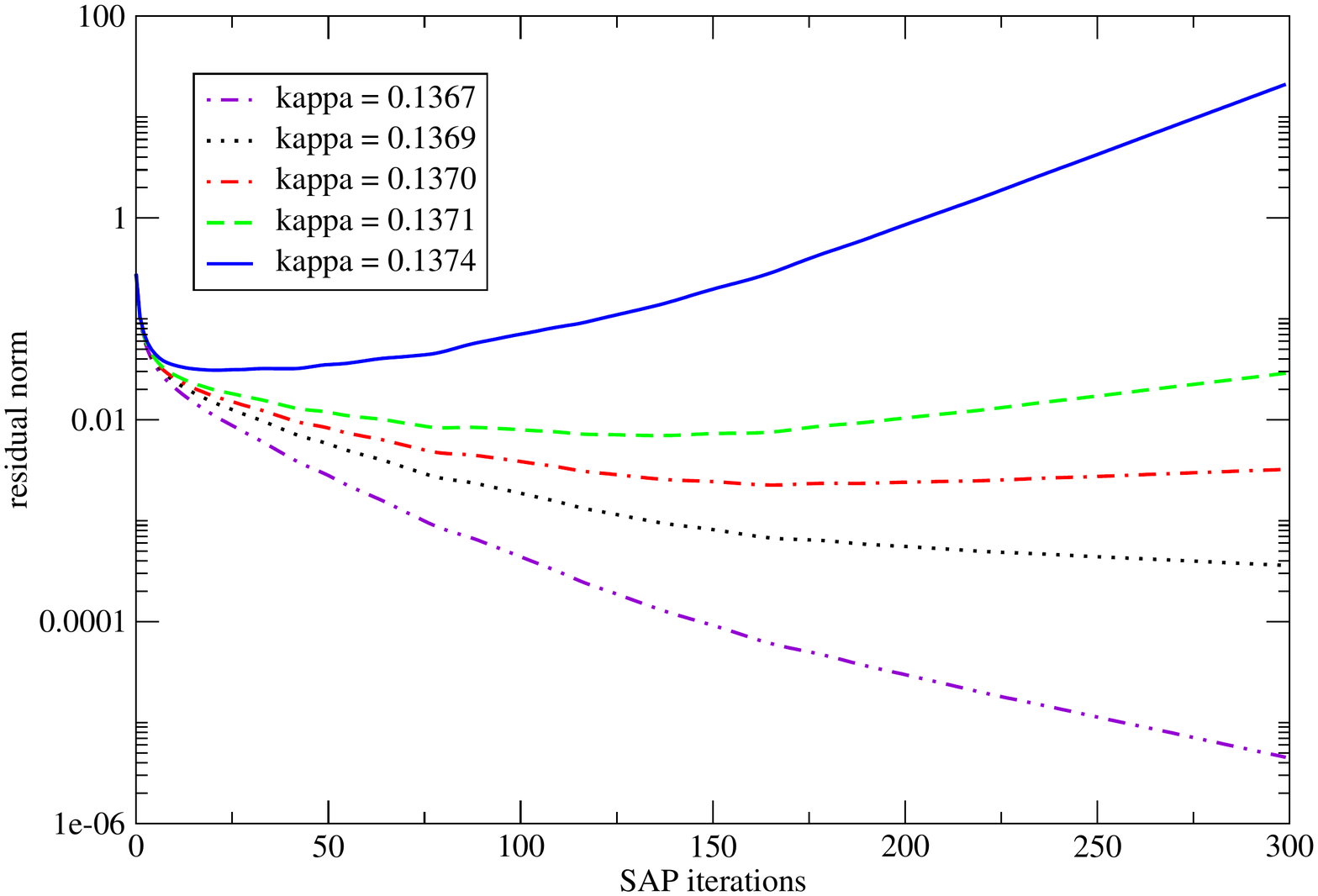}}
\caption{Convergence and divergence of the SAP method as a stand-alone solver for various values of $\kappa$\label{stand_alone:fig}}
\end{figure}

In a last experiment for this configuration we set $\kappa = 0.1374$, which is beyond the critical value. 
The SAP iteration by itself is divergent for this choice of $\kappa$ (see Figure~\ref{stand_alone:fig}), 
and when used as a preconditioner for GMRES we observe complete stagnation unless the restart value $m$ 
is large enough ($m=48$). On the other hand, deflation with a relatively small dimension of the 
deflating subspace results in quite fast convergence for moderate values of the restart value $m$. 
This indicates that the divergence of the pure SAP iteration is due to the fact that just a few eigenvectors belonging to the 
corresponding iteration matrix  are troublesome and that they are effectively removed by deflation. 

The second set of experiments consists of inversions of a state-of-the-art $48^3 \times 64$ $N_f=2$ lattice configuration thermalized at $\beta=5.40, \kappa_{\rm sea} = 0.1366$ and $c_{sw}=1.8228$, while  $\kappa_c \approx 0.13670$

In this case a full 256 nodes QPACE rack was used. The machine is configured as a 3D $4 \times 8 \times 8$ torus and we used an SAP block size of $8\times 2 \times 6 \times 6$ that fits the 8 SPU local stores  nicely and satisfies all the lattice and implementation constraints. 
For the test inversions presented, we have chosen to use aggressive preconditioner settings that are particularly advantageous on massively parallel machines. These settings reduce the overhead of global sums and scalar products --- global sums being limited by network latency and local scalar products by memory bandwidth. 
In the preconditioner, we therefore performed 24 SAP cycles with 6 MR iterations for the subdomains. Figure~\ref{13663_13666:fig} shows that for $\kappa = 0.13663$, using 
a small dimension for the deflating subspace, i.e., $k=3$, and keeping the restart length fixed to $m=18$, 
the iteration number decreases from 183 to 159, i.e., by about 15\%. For the larger value of $\kappa = 0.13666$,
deflating 3 vectors halves the iteration numbers from approximately 500 down to 250. Increasing the cycle length does not substantially affect the deflated method. Figure~\ref{13668_136670:fig} shows results for larger values of $\kappa$. For $\kappa = 0.13668$ non-deflated restarted GMRES almost stagnates even if we use 
a relatively large cycle length, $m= 36$. On the other hand, deflating 3 or 6 vectors cures this stagnation 
completely, and we can even reduce the cycle length down to 24 or even 16. A similar observation holds for 
$\kappa = 0.13670$.    

\begin{figure}
\centerline{\includegraphics[width = 0.62\textwidth]{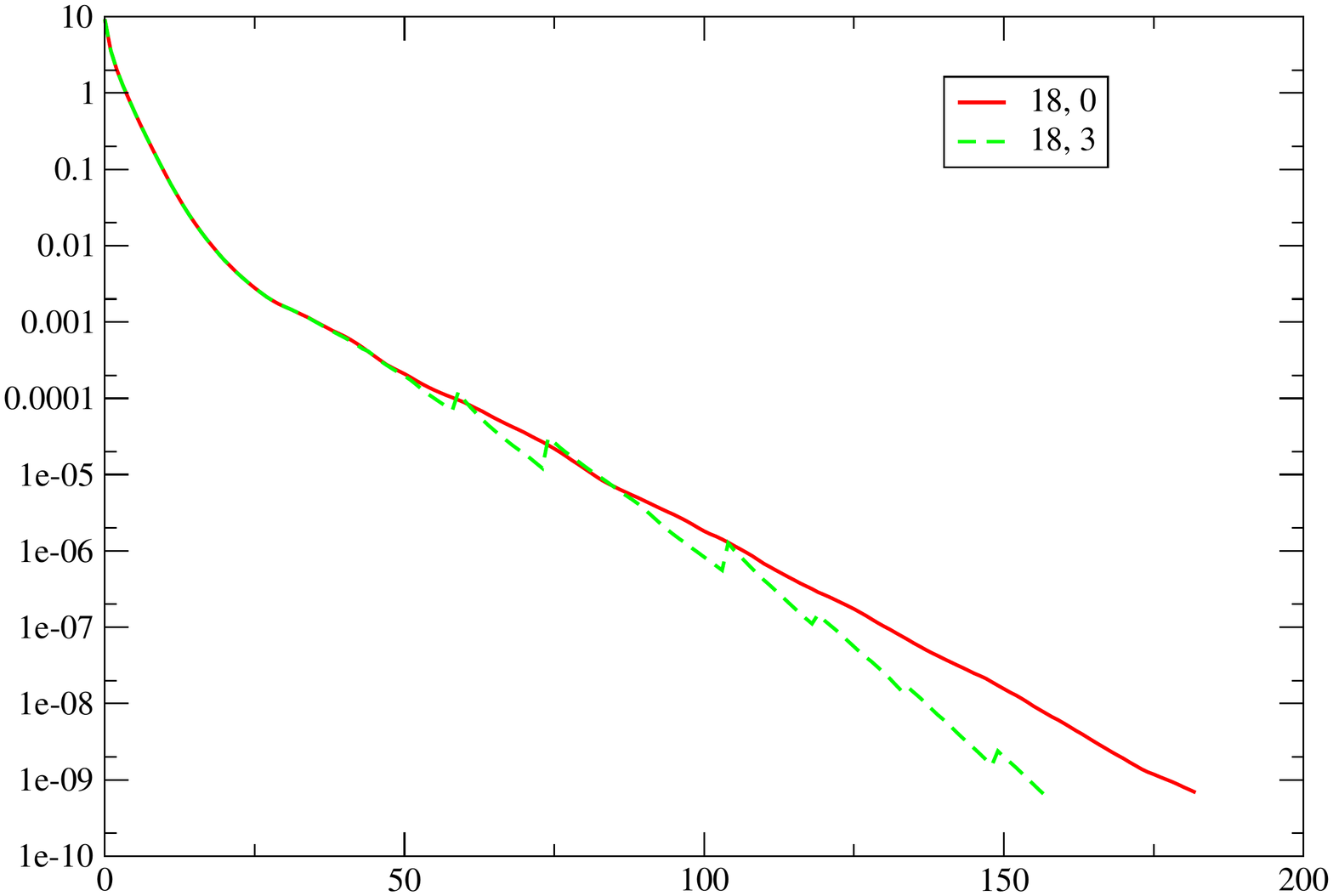} \hfill
\includegraphics[width = 0.62\textwidth]{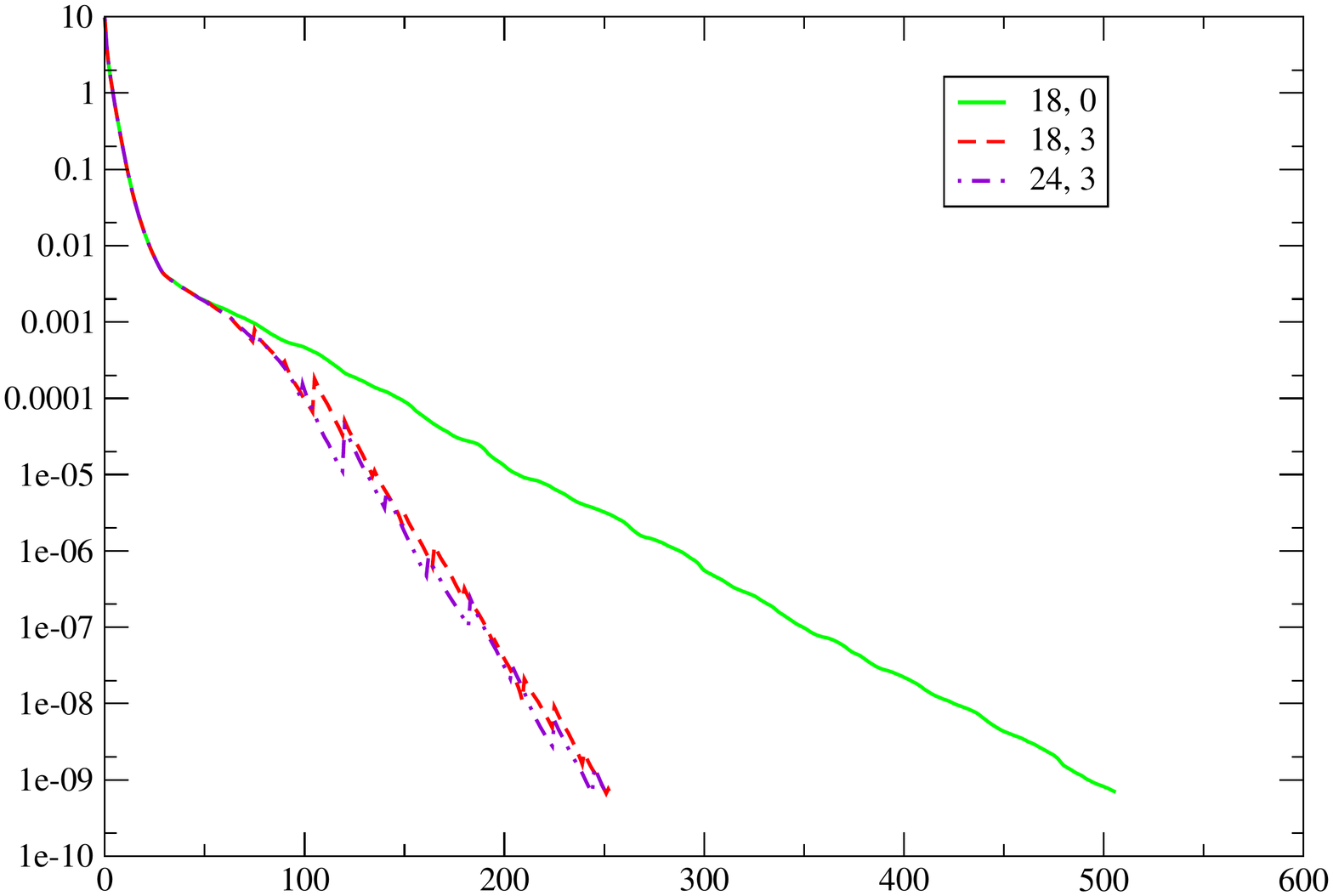}
}
\caption{Results with and without deflation for different solver parameters. $48^3 \times 64$ configuration, $\kappa = 0.13663$ (left),
$\kappa = 0.13666$ (right).\label{13663_13666:fig}} 
\end{figure}

\begin{figure}
\centerline{\includegraphics[width = 0.62\textwidth]{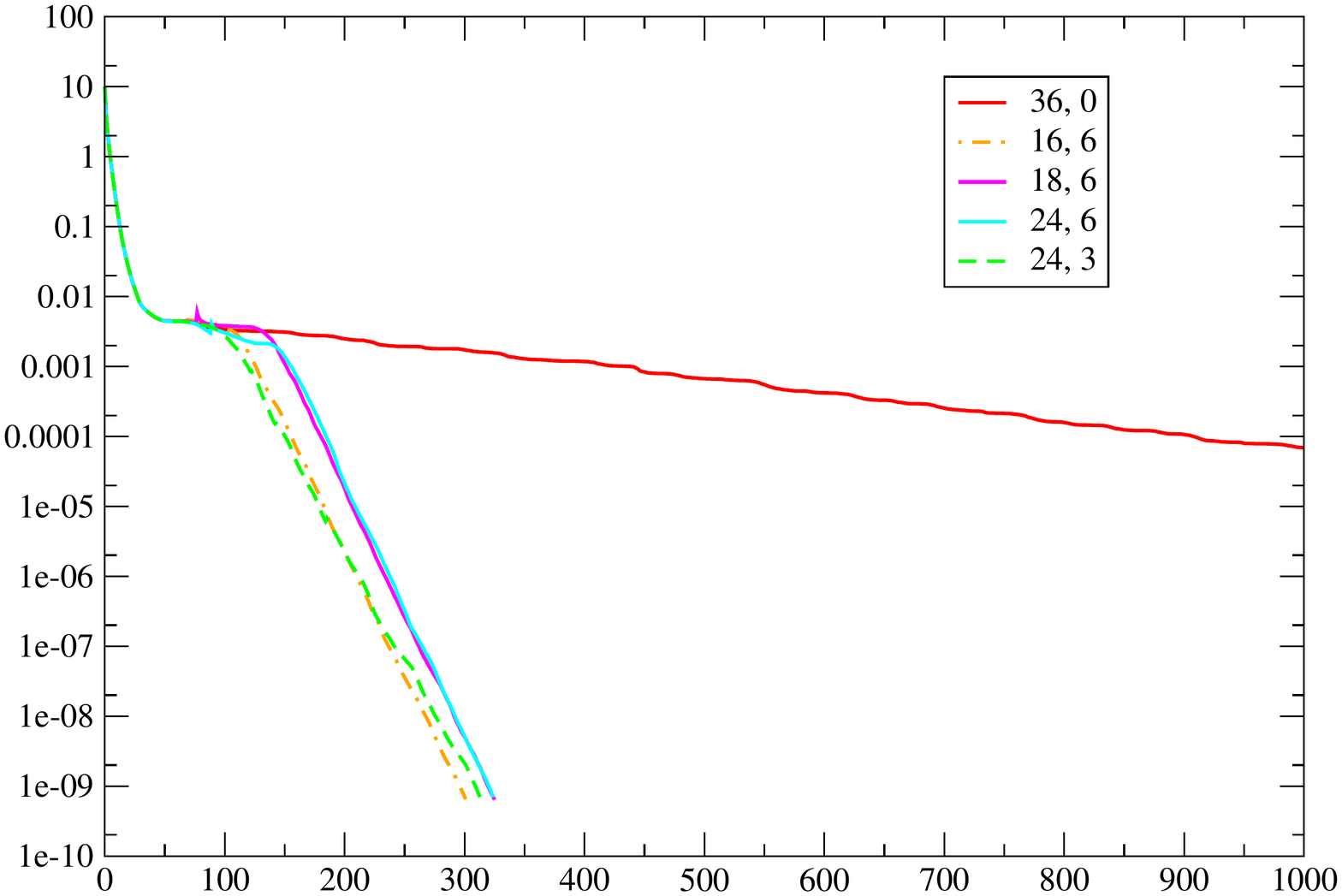} \hfill
\includegraphics[width = 0.62\textwidth]{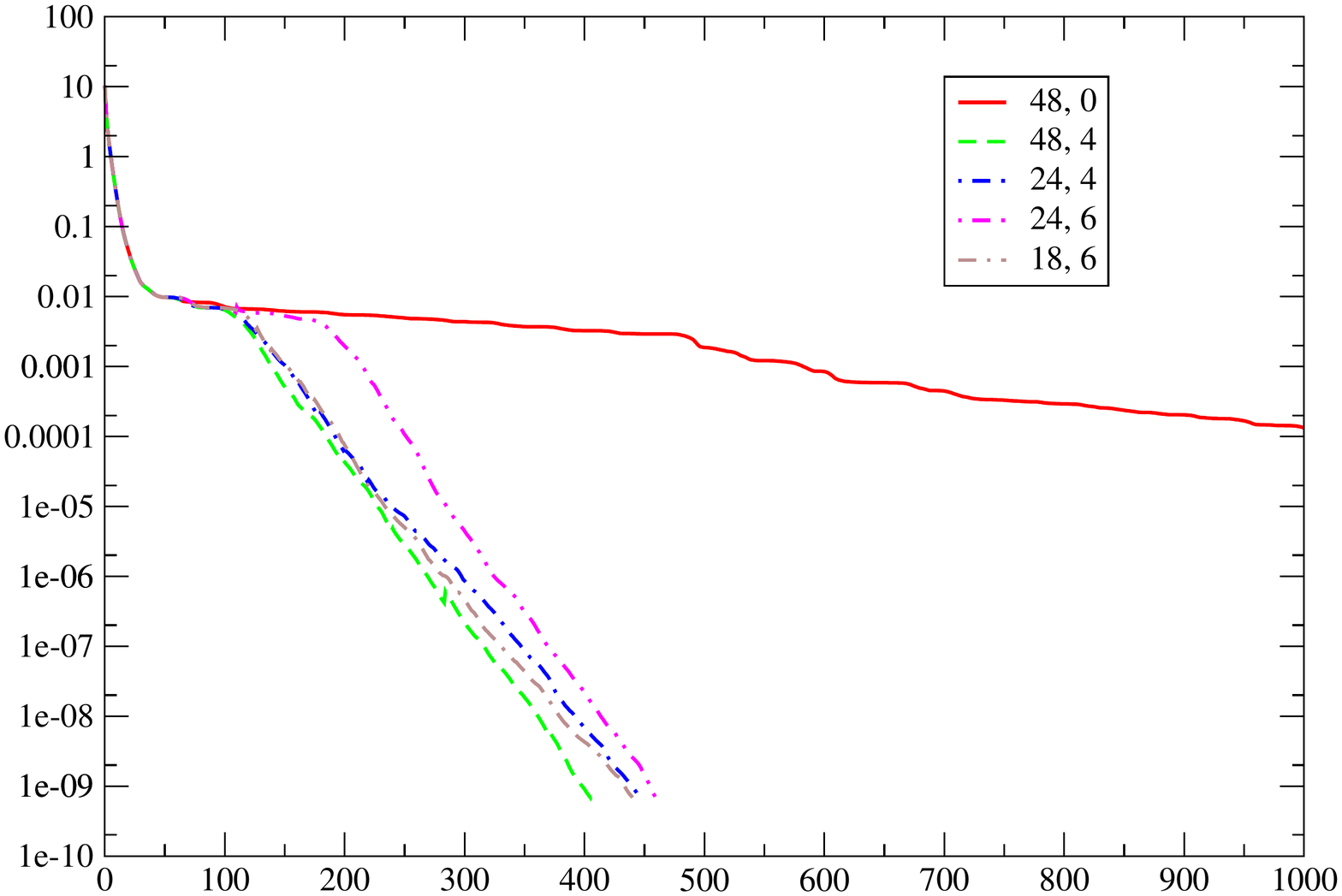}
}
\caption{Results with and without deflation for different solver parameters. $48^3 \times 64$ configuration, $\kappa = 0.13668$ (left),
$\kappa = 0.13670$ (right).\label{13668_136670:fig}} 
\end{figure}

In Table~\ref{table} we report the timings on QPACE for all our computations on $48^3\times 64$ configurations.
These numbers show that wall clock time follows the iteration counts very closely. For the range
tested, the cycle length $m$ by itself has only a marginal influence. This is
due to the fact that our code spends most of its time in the 24 SAP iterations of the preconditioner, so that
the work for the Arnoldi process, which increases with $m$, is relatively cheap for any of our choices for $m$.

  \begin{table}[h]
\begin{center}    
    \begin{tabular}{ | l | c | c | c | c |}
      
    \hline
    kappa & m & k & iterations & time (s) \\ \hline
    0.13663 & 18 & 0 & 183 & 39.0 \\ \hline
    0.13663 & 18 & 3 & 159 & 35.9 \\ \hline
    
    0.13666 & 18 & 0 & 507 & 107.1 \\ \hline
    0.13666 & 18 & 3 & 255 & 56.7 \\ \hline
    0.13666 & 24 & 3 & 252 & 57.4  \\ \hline
    
    0.13668 & 36 & 0 & 3464 & 741.9 \\ \hline
    0.13668 & 16 & 6 & 302 & 70.4 \\ \hline
    0.13668 & 18 & 6 & 326 & 74.9 \\ \hline
    0.13668 & 24 & 6 & 325 & 73.7 \\ \hline
    0.13668 & 24 & 3 & 315 & 69.9 \\ \hline
    
    0.13670 & 48 & 0 & 2869 & 623.6 \\ \hline
    0.13670 & 48 & 4 & 409 & 92.3 \\ \hline
    0.13670 & 24 & 4 & 448 &  100.0\\ \hline
    0.13670 & 24 & 6 & 461 & 106.1 \\ \hline
    0.13670 & 18 & 6 & 441 & 102.0 \\
    \hline

    \end{tabular}
\end{center}
  \caption{Timings and iterations for the $48^3\times 64 $ configuration for the tested $\kappa$ and solver settings combinations \label{table}}  
  \end{table}

\section{Conclusions}
\label{sec:concl}
While SAP can be implemented efficiently on current supercomputers with a deep memory hierarchy and many compute cores like QPACE, it is not necessarily a convergent iteration, particularly if the hopping parameter $\kappa$ 
is close to the critical value. This can very severely
impede the convergence of SAP preconditioned restarted flexible GMRES or restarted GCR. 
We have introduced {\em deflated} restarted flexible GMRES as a (Clover) Wilson fermion solver to cure 
this problem. Numerical results obtained for thermalized configurations of size up to $48^3 \times 64$
show that using a very small dimension for the deflating subspace (never more than 6) always results in
satisfactory convergence speeds. An additional benefit is that deflation allows to use relatively small
cycle lengths (in our experiments, $m=16$ to $24$ was always sufficient), which reduces the arithmetic cost 
of the Arnoldi process and, more importantly, keeps memory requirements low. The benefits of deflation are most prominent
when the hopping parameter $\kappa$ is close to the critical value $\kappa_c$ or even larger than $\kappa_c$,
so that deflation is particularly helpful on exceptional configurations arising during an HMC simulation.

The parameter space on which the performance of SAP preconditioned deflated restarted flexible GMRES 
can be optimized, is very large: the size and shape of the subdomains for SAP, the method and number of iterations
to use for the (approximate) subdomain solves, the number of SAP iterations per preconditioning step, 
the cycle length $m$ and the deflating subspace dimension $k$. We could therefore not perform a complete investigation of this 
parameter subspace.
Rather, our approach was to fix parameters related to the SAP preconditioner by efficiency considerations
for our given hardware, and then find good values for $k$ and $m$. Although this quite surely means that we
missed the overall best method, all our results consistently indicate that it is always worth 
to include a (small) deflating subspace: Convergence is faster for the same cycle length, and for hard problems,
convergence is enabled at all. We thus think that deflated restarted flexible GMRES should establish
itself as the standard successor to flexible restarted GMRES or GCR and even more to preconditioned BiCGstab
for which we observed convergence problems relatively often during simulations for large configurations.      

We note that an alternative approach to the method presented here are multilevel methods like adaptive algebraic
multigrid, see \cite{MGClark2007,MGClark2010_1} or L\"uschers deflated domain decomposition method \cite{Luescher2007} and its true multilevel variants \cite{FrKaKrLeRo11}.
As compared to these approaches, the deflated GMRES method does not require a set-up phase which can be 
relativley costly for the other methods if only one or a few right hand sides have to be solved. It has also 
the advantage of being quite simple to implement and to fit modern supercomputer architectures well. 

\section*{Acknowledgements} 
We would like to thank Dirk Pleiter for his continuing support of this work
and Tilo Wettig and Jaques Bloch for illuminating discussions. We are grateful to Xavier Pinel for 
making his FORTRAN code on FGMRES-DR available to us. 
\section{Bibliography} 
 \bibliographystyle{elsarticle-num}
 \bibliography{Biblio_with_et_al}

\end{document}